\let\latexcirc=\circ
\newcommand{\ccirc}{\mathbin{\mathchoice
  {\xcirc\scriptstyle}
  {\xcirc\scriptstyle}
  {\xcirc\scriptscriptstyle}
  {\xcirc\scriptscriptstyle}
}}
\newcommand{\xcirc}[1]{\vcenter{\hbox{$#1\latexcirc$}}}
\let\circ\ccirc
\def\Ddots{\mathinner{\mkern1mu\raise\p@
\vbox{\kern7\p@\hbox{.}}\mkern2mu
\raise4\p@\hbox{.}\mkern2mu\raise7\p@\hbox{.}\mkern1mu}}
\newtheorem{theorem}{Theorem}[section]
\newtheorem{proposition}[theorem]{Proposition}
\newtheorem{proposition/definition}[theorem]{Proposition/Definition}
\newtheorem{lemma}[theorem]{Lemma}
\newtheorem{corollary}[theorem]{Corollary}
\newtheorem{prop}[theorem]{Proposition}
\theoremstyle{definition}
\newtheorem{defn}[theorem]{Definition}
\theoremstyle{remark}
\newcommand{\tp}{{\scriptscriptstyle\mathsf{T}}}
\let\O\undefined
\DeclareMathOperator{\O}{O}
\DeclareMathOperator{\rank}{rank}
\DeclareMathOperator{\spn}{span}
\DeclareMathOperator{\Aut}{Aut}
\DeclareMathOperator{\Gal}{Gal}
\newcommand{\Rmnum}[1]{\expandafter\@slowromancap\Romannumeral #1@}
\tikzset{
  symbol/.style={
    draw=none,
    every to/.append style={
      edge node={node [sloped, allow upside down, auto=false]{$#1$}}}
  }
}
\tikzstyle{startstop} = [rectangle, rounded corners, minimum width = 2cm, minimum height=1cm,text centered, draw = black, fill = red!40]
\tikzstyle{process} = [rectangle, minimum width=1cm, minimum height=1cm, text centered, draw=black, fill = yellow!50]
\tikzstyle{processb} = [rectangle, minimum width=1cm, minimum height=1cm, text centered, draw=black, fill = blue!50]
\tikzstyle{arrow} = [->,>=stealth]
\tikzstyle{dotsnd}=[rectangle, minimum width=1cm, minimum height=1cm, text centered]
\tikzstyle{processd} = [dashed, minimum width=8cm, minimum height=1.5cm, text centered, draw=black]
\tikzstyle{processdd} = [dashed, minimum width=7cm, minimum height=1.5cm, text centered, draw=black]
\begin{document}
\title{skew-sparse matrix multiplication}
\author[Huang Ye and Gao]{Qiao-Long Huang$^1$, Ke Ye$^2$, Xiao-Shan Gao$^2$\\$^1$ Research Center for Mathematics and Interdisciplinary Sciences, Shandong University\\$^2$ KLMM, UCAS, Academy of Mathematics and Systems Science,
Chinese Academy of Sciences}

\begin{abstract}
Based on the observation that $\mathbb{Q}^{(p-1) \times (p-1)}$ is isomorphic to a quotient skew polynomial ring, we propose a new method for $(p-1)\times (p-1)$ matrix multiplication over $\mathbb{Q}$, where $p$ is a prime number. The main feature of our method is the acceleration for matrix multiplication if the product is skew-sparse. Based on the new method, we design a deterministic algorithm with complexity $\O(T^{\omega-2} p^2)$, where $T\le p-1$ is a parameter determined by the skew-sparsity of input matrices and $\omega$ is the asymptotic exponent of matrix multiplication. Moreover, by introducing randomness, we also propose a probabilistic algorithm with complexity $\O^\thicksim(t^{\omega-2}p^2+p^2\log\frac{1}{\nu})$, where $t\le p-1$ is the skew-sparsity of the product and $\nu$ is the probability parameter.
\end{abstract}



\maketitle

\section{introduction}
Fast algorithms for matrix multiplication are essential ingredients in numerous fundamental computational problems such as linear system solving~\cite{wiedemann1986solving}, matrix inversion~\cite{bunch1974triangular}, determinant evaluation~\cite{keller1985fast} and Boolean matrix multiplication~\cite{fischer1971boolean,Ian1971Efficient}.
Thus the computational complexity of matrix multiplication is one of the most important problems in theoretical computer science and mathematics, which attracts extensive attention from researchers and practitioners in various areas.

The complexity of matrix multiplication is usually measured by \emph{the asymptotic exponent} $\omega$, which is defined by
\[
\omega \coloneqq\liminf_{n\to \infty} \log_n M_n,
\]
where $M_n$ is the total cost of operations for $n\times n$ matrix multiplication. It is clear from the definition that $2 \le \omega \le 3$. For a relatively long time, it was believed that $\omega = 3$. The seminal work \cite{1969Gaussian} of Strassen for the first time proved that $\omega \le 2.81$, after which the upper bound of $\omega$ was further improved by a series of works \cite{HK71, Pan78,Bini80, Strassen86}. Using a powerful technique called the ``laser method" developed in \cite{CoppersmithW90,Williams12,Gall14a}, Alman and Williams were able to obtain the state-of-the-art upper bound $\omega \le 2.37286$ \cite{AlmanW21}. Unfortunately, it was proved that the upper bound of $\omega$ one can obtain by the ``laser method" is at least $2.3078$ \cite{AFL15}, indicating the necessity of developing a new method.

Besides the complexity of matrix multiplication for general matrices, the same problem for structured matrices is also of great importance in both theoretical studies \cite{Gustavson78,Schorr82,YusterZ05,CHILO18,YL18,LY20} and practical applications\cite{CGR88,GR97,OS00,Van00,GV13,DGPRR18}. The most commonly considered structured matrices are circulant, Toeplitz, Hankel, Cauchy, symmetric, sparse matrices and their variants.

In this paper, we discuss fast matrix multiplication for a new type of structured matrices called \emph{skew-sparse} matrices, which are defined by sparse skew polynomials. It is noticeable that the complexity of fast sparse matrix multiplication algorithms \cite{Gustavson78,Schorr82,YusterZ05} can be even higher than $\O(n^{2.37286})$ when applied to dense matrices. Our algorithm for skew-sparse matrix multiplication, as a comparison, has complexity at most $\O(n^{\omega})$.
\section{preliminaries}
In this section, we provide some preliminaries on cyclotomic field, normal basis and skew polynomials.
\subsection{cyclotomic field}\label{subsec:cyclotomic}
Let $p = n+1$ be a prime and let $\beta$ be a $p$-th primitive root of unity. Hence we have
\[
\beta^{p-1}+\cdots+\beta+1=0.
\]
The extension field $\mathbb{Q}(\beta)$ of $\mathbb{Q}$ is called the \emph{cyclotomic field} generated by $\beta$. It is clear that $[\mathbb{Q}(\beta):\mathbb{Q}] = p - 1$.

A \emph{primitive root of $\mathbb{Z}_p$} is an integer $r\in \{0,1,\dots, p-1\}$ such that $\langle r \rangle = \mathbb{Z}^{\times}_p$. Given a primitive root of $\mathbb{Z}_p$, we consider a $\mathbb{Q}$-homomorphism
\begin{equation}\label{eq:sigma}
\sigma: \mathbb{Q}(\beta) \to \mathbb{Q}(\beta),\quad \beta\mapsto \beta^r.
\end{equation}
Since $r$ is a primitive root of $\mathbb{Z}_{p}$, we have
\[
\{1,2,\dots,p-1\} \equiv \{1,r,\dots,r^{p-2}\}\pmod{p}.
\]
This implies that $\sigma$ permutes $\beta,\beta^2,\dots,\beta^{p-1}$, which are roots of the irreducible polynomial $x^{p-1}+x^{p-2}+\cdots+x+1$. Thus $\sigma$ is an automorphism of $\mathbb{Q}(\beta)$ over $\mathbb{Q}$, i.e., $\sigma\in \Aut(\mathbb{Q}(\beta)/\mathbb{Q})$.
\subsection{normal basis}\label{subsec:normal basis}
We recall that given a Galois extension $F\subseteq K$ with the Galois group $G$. A \emph{normal basis} is a basis of $K/F$, which is of the form $\{g(\beta):g\in G\}$ for some $\beta\in K$. We notice that the Galois group of $\mathbb{Q}(\beta)/\mathbb{Q}$ is $\Gal(\mathbb{Q}(\beta)/\mathbb{Q}) =\langle\sigma\rangle$, where $\sigma$ is the automorphism defined in \eqref{eq:sigma}. For each $i = 1,\dots, p-1$, we denote
\begin{equation}\label{eq:normal basis}
v_i \coloneqq \sigma^i(\beta) =  \beta^{r^{i-1}}.
\end{equation}
It is straightforward to verify that $v_{i+1} = \sigma(v_i)$. According to the discussion in Subsection~\ref{subsec:cyclotomic} and the construction of $v_i$'s, we have the following:
\begin{lemma}\label{lem:normal basis}\cite[page~4]{gao1993normal}
$\{v_1,v_2,\dots,v_{p-1}\}$ is a normal basis of $\mathbb{Q}(\beta)/\mathbb{Q}$.
\end{lemma}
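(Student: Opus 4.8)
The plan is to show that $\{v_1, \dots, v_{p-1}\}$ with $v_i = \beta^{r^{i-1}}$ spans $\mathbb{Q}(\beta)$ over $\mathbb{Q}$; since the extension has degree $p-1$ and we have exactly $p-1$ elements, spanning is equivalent to linear independence, so it suffices to prove either one. First I would note, as already observed in Subsection~\ref{subsec:cyclotomic}, that because $r$ is a primitive root modulo $p$, the exponents $r^0, r^1, \dots, r^{p-2}$ run through all of $\{1, 2, \dots, p-1\}$ modulo $p$. Hence $\{v_1, \dots, v_{p-1\}} = \{\beta, \beta^2, \dots, \beta^{p-1}\}$ as sets, merely listed in a different order. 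So the claim reduces to the classical fact that $\{\beta, \beta^2, \dots, \beta^{p-1}\}$ is a $\mathbb{Q}$-basis of $\mathbb{Q}(\beta)$.

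For that classical fact, I would argue as follows. The minimal polynomial of $\beta$ over $\mathbb{Q}$ is the $p$-th cyclotomic polynomial $\Phi_p(x) = x^{p-1} + x^{p-2} + \cdots + x + 1$, which is irreducible of degree $p-1$ (Eisenstein at $p$ after the substitution $x \mapsto x+1$, a standard argument I would cite or sketch in one line). Therefore $\{1, \beta, \beta^2, \dots, \beta^{p-2}\}$ is a $\mathbb{Q}$-basis. To pass from this "power basis" to $\{\beta, \dots, \beta^{p-1}\}$, observe that the change-of-basis relation is governed by the single relation $\beta^{p-1} = -(1 + \beta + \cdots + \beta^{p-2})$; concretely, the transition matrix from $(1, \beta, \dots, \beta^{p-2})$ to $(\beta, \beta^2, \dots, \beta^{p-1})$ is $\begin{bmatrix} 0 & I_{p-2} \\ -\mathbf{1}^{\tp} & -\mathbf{1}^{\tp} \end{bmatrix}$-type and has determinant $\pm 1$ (one can compute it by expanding along the first column, where the only nonzero entry is the $-1$ coming from the constant term of $\Phi_p$). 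Hence $\{\beta, \dots, \beta^{p-1}\}$ is also a basis. Combining with the reordering from the previous paragraph gives the result, and since each $v_i = \sigma^i(\beta) = \sigma^{i}(v_0)$ with $\sigma$ generating $\Gal(\mathbb{Q}(\beta)/\mathbb{Q})$, this basis is by definition a normal basis.

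An alternative, perhaps cleaner route that avoids the determinant computation: prove linear independence of $\{\beta, \dots, \beta^{p-1}\}$ directly. Suppose $\sum_{k=1}^{p-1} c_k \beta^k = 0$ with $c_k \in \mathbb{Q}$. Using $\sum_{k=0}^{p-1} \beta^k = 0$, we may subtract $c_0 := $ (any constant) times this relation; more to the point, the relation $\sum_{k=1}^{p-1} c_k \beta^k = 0$ together with $\sum_{k=0}^{p-1} \beta^k = 0$ shows that the polynomial $\sum_{k=1}^{p-1} c_k x^k - c_1 \sum_{k=0}^{p-1} x^k$ (which has degree $\le p-1$ and vanishes at $\beta$) must be divisible by $\Phi_p(x)$; comparing degrees and the top coefficient forces it to be a scalar multiple of $\Phi_p$, and matching the constant terms ($0$ on the left, nonzero multiple of $1$ on the right unless the scalar is $0$) forces all $c_k$ equal, then equal to $0$. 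Either way the argument is short.

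The main obstacle is essentially bookkeeping rather than conceptual depth: the genuinely nontrivial input is the irreducibility of $\Phi_p$ over $\mathbb{Q}$ (equivalently $[\mathbb{Q}(\beta):\mathbb{Q}] = p-1$), which the excerpt already takes as given. Granting that, everything else is a routine reordering-plus-unimodular-change-of-basis argument, so I would keep the write-up brief and lean on the reference \cite{gao1993normal} for the standard normal basis formalism, supplying only the short verification that the particular ordering induced by the primitive root $r$ does not affect the spanning property.
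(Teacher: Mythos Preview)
Your argument is correct. The paper itself does not supply a proof of this lemma; it simply cites \cite[page~4]{gao1993normal} and points to ``the discussion in Subsection~\ref{subsec:cyclotomic},'' which is precisely your first observation: since $r$ is a primitive root modulo $p$, the set $\{v_1,\dots,v_{p-1}\}$ is just a reordering of $\{\beta,\beta^2,\dots,\beta^{p-1}\}$. Your write-up goes further than the paper by actually verifying that $\{\beta,\dots,\beta^{p-1}\}$ is a $\mathbb{Q}$-basis, which is fine and appropriate. One small cleanup: in your closing sentence you write $v_i=\sigma^i(v_0)$, but no $v_0$ is defined; you mean the Galois orbit $\{\sigma^j(\beta):0\le j\le p-2\}$, which is exactly $\{v_1,\dots,v_{p-1}\}$ and hence a normal basis by definition. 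Your ``alternative route'' is also correct but can be stated more crisply: if $\sum_{k=1}^{p-1}c_k\beta^k=0$ then $\Phi_p(x)$ divides $\sum_{k=1}^{p-1}c_kx^k$, and comparing degrees and constant terms forces all $c_k=0$.
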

Since each $v_i$ is a  power of $\beta$, with respect to the basis $\{v_1,v_2,\dots,v_{p-1}\}$, there exist fast algorithms \cite{P2010Algebraic} to compute multiplication,
division, addition and subtraction in $\mathbb{Q}(\beta)$. Moreover, by identifying $\mathbb{Q}(\beta)$ with $\mathbb{Q}[z]/(H(z))$, where
$H(z)=z^{p-1}+\cdots+1$, we have the following observation regarding the complexity of arithmetic operations in $\mathbb{Q}(\beta)$:
\begin{lemma}\cite{1991On}\label{lem:arithmetic operation}
One arithmetic operation  in $\mathbb{Q}(\beta)$ equals to $\O^\thicksim(p)$ arithmetic operations in $\mathbb{Q}$.
\end{lemma}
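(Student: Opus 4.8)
The plan is to work through the isomorphism $\mathbb{Q}(\beta)\cong\mathbb{Q}[z]/(H(z))$ recorded just before the statement, representing every element of $\mathbb{Q}(\beta)$ by its coefficient vector of length $p-1$ in the power basis $1,z,\dots,z^{p-2}$ (or, equivalently, by $p$ coordinates modulo $z^p-1$). Under this identification, addition and subtraction are performed coordinatewise and therefore cost exactly $p-1=\O(p)$ operations in $\mathbb{Q}$; these cases require no further comment, so the real content is multiplication and division.

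For multiplication I would lift the two operands to polynomials $f,g\in\mathbb{Q}[z]$ of degree at most $p-2$, form the ordinary product $fg$, and then reduce modulo $H(z)$. The product $fg$ has degree at most $2p-4$ and can be computed in $\O^\thicksim(p)$ operations in $\mathbb{Q}$ by a fast polynomial multiplication algorithm valid over an arbitrary coefficient ring (for instance the Cantor--Kaltofen algorithm), which does not rely on roots of unity lying in $\mathbb{Q}$. The subsequent reduction modulo the \emph{fixed} polynomial $H(z)$ of degree $p-1$ is a single Euclidean division, which via Newton iteration on the reversed polynomials again costs $\O^\thicksim(p)$ operations in $\mathbb{Q}$. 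Alternatively one may first reduce $fg$ modulo $z^p-1$, which is free as it merely re-indexes coefficients modulo $p$, and then reduce modulo $H(z)$ using $z^p-1=(z-1)H(z)$; this variant keeps all intermediate objects of size $\O(p)$. Either way, a multiplication in $\mathbb{Q}(\beta)$ costs $\O^\thicksim(p)$ operations in $\mathbb{Q}$.

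For division it suffices to invert a nonzero element $f\in\mathbb{Q}[z]/(H(z))$. Since $H$ is irreducible over $\mathbb{Q}$ we have $\gcd(f,H)=1$, and the fast (half-GCD based) extended Euclidean algorithm produces $a,b\in\mathbb{Q}[z]$ with $af+bH=1$ in $\O^\thicksim(p)$ operations in $\mathbb{Q}$; then $a\bmod H$ is the sought inverse, and a general division is one inversion followed by one multiplication, still $\O^\thicksim(p)$. Assembling the three cases yields the claim.

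I do not expect a genuine obstacle: the only delicate point is that $\mathbb{Q}$ does not contain the roots of unity needed for a naive FFT, so one must invoke a fast multiplication routine valid over arbitrary algebras (together with the fast division and fast GCD built on top of it), with the resulting logarithmic factors absorbed into the $\O^\thicksim$ notation. All of this is classical and is precisely the content of the cited reference, so the proof amounts to stringing together these standard subroutines.
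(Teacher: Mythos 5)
Your proposal is correct and follows exactly the route the paper indicates: identify $\mathbb{Q}(\beta)$ with $\mathbb{Q}[z]/(H(z))$, then invoke fast polynomial multiplication over an arbitrary coefficient ring (the content of the cited Kaltofen reference) together with fast Euclidean division and fast extended GCD to cover multiplication and inversion in $\O^\thicksim(p)$ base-field operations. The paper states the lemma as a citation without a written proof, so there is nothing further to compare; your argument is the standard one and correctly flags the only delicate point, namely that $\mathbb{Q}$ lacks the roots of unity needed for a naive FFT.
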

\subsection{skew polynomials}
Let $x$ be an indeterminate and let $\sigma$ be the automorphism defined in \eqref{eq:sigma}. We denote by $\mathbb{Q}(\beta)[x;\sigma]$ the ring $(\mathbb{Q}(\beta)[x],+,\ast_\sigma)$, where $+$ is the usual polynomial addition and $\ast_\sigma$ is induced by $x \ast_{\sigma} c \coloneqq \sigma(c)x$ and $c \ast_{\sigma} x \coloneqq cx$. More precisely, we have
\begin{align*}
\sum_{i=0}^d a_i x^i + \sum_{i=0}^d b_i x^i &= \sum_{i=0}^d (a_i + b_i) x^i, \\
\left( \sum_{j=0}^d a_j x^j \right) \ast_{\sigma} \left(  \sum_{k=0}^e b_k x^k\right) &=\sum_{\ell=0}^{d+e} \sum_{s=0}^{\ell} a_s\sigma^{s}(b_{\ell-s}) x^\ell. \\
\end{align*}
Here in the second formula, we adopt the convention that $a_i =0$ (resp. $b_i=0$) if $i>d$ (resp. $i>e$). $\mathbb{Q}(\beta)[x;\sigma]$ is called \emph{the ring of skew polynomials}. According to \cite[Lemma~1.4]{CL17}, there exists an isomorphism of algebras:
\begin{align}\label{eq:isomorphism}
\varepsilon: \mathbb{Q}(\beta)[x;\sigma]/(x^{p-1}-1) &\to \mathrm{End}_{\mathbb{Q}}(\mathbb{Q}(\beta)), \\
\sum_{i=0}^{p-2} a_i x^i &\mapsto \sum_{i=0}^d a_i \sigma^i. \nonumber
\end{align}
Here an element in $ \mathbb{Q}(\beta)[x;\sigma]/(x^{p-1}-1)$ is written as a polynomial of degree at most $p-2$ and we adopt this convention in the rest of the paper.
\subsubsection{sumset of skew polynomials}
In the sequel, we need the notion of the sparsity of a skew polynomial. Given a polynomial $f =\sum_{i=1}^t a_i x^{e_i}\in \mathbb{Q}(\beta)[x;\sigma]/(x^{p-1}-1)$ where $0 \le e_1 < \cdots < e_t \le p-2$ and all $a_i \ne 0, i=1,\dots, t$, the set $\mathrm{supp}(f) \coloneqq \{e_1,\dots,e_t\}$ is called the \emph{support} of $f$. Consequently, we define the \emph{sparsity} of $f$ to be $\# f \coloneqq \# \mathrm{supp}(f) = t$.

Given two polynomials $f$ and $g$ in $\mathbb{Q}(\beta)[x;\sigma]/(x^{p-1}-1)$, we define the \emph{sumset of $f$ and $g$} by
\begin{equation}\label{eq:sumset}
\mathbb{S}(f,g) \coloneqq \{e_{f}+e_{g}:  e_{f}\in\mathrm{supp}(f),e_{g}\in\mathrm{supp}(g) \} \pmod{p-1}.
\end{equation}

\begin{lemma}\label{lem:sparsity}
We have the following:
\begin{enumerate}[(i)]
\item\label{lem:sparsity:item1} $\mathbb{S}(f,g)$ contains $\mathrm{supp}(f\ast_\sigma g)$;
\item\label{lem:sparsity:item2} $\# (f\ast_\sigma g)\leq \#\mathbb{S}(f,g) \le
\#f \cdot \#g$ and $\#\mathbb{S}(f,g) \le p-1$;
\item\label{lem:sparsity:item3} the strict inequality $\#(f\ast_\sigma g) <\#\mathbb{S}(f,g)$ holds only in the occurrence of coefficient cancellations.
\end{enumerate}
\end{lemma}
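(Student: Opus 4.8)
The plan is to unwind the definition of the twisted product $\ast_\sigma$ and read off where the exponents of the result can possibly live. Recall that
\[
\left( \sum_{j} a_j x^j \right) \ast_{\sigma} \left( \sum_{k} b_k x^k\right) = \sum_{\ell} \left( \sum_{s} a_s\sigma^{s}(b_{\ell-s}) \right) x^\ell,
\]
and that in the quotient ring $\mathbb{Q}(\beta)[x;\sigma]/(x^{p-1}-1)$ the exponent $\ell$ is taken modulo $p-1$. So the coefficient attached to the monomial $x^m$ (for $m \in \mathbb{Z}/(p-1)\mathbb{Z}$) is $\sum_{s + k \equiv m} a_s \sigma^{s}(b_k)$, where the sum runs over pairs $(s,k)$ with $s \in \operatorname{supp}(f)$, $k \in \operatorname{supp}(g)$ and $s + k \equiv m \pmod{p-1}$.

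For part \eqref{lem:sparsity:item1}: if $m \notin \mathbb{S}(f,g)$, then there is no pair $(e_f, e_g) \in \operatorname{supp}(f)\times\operatorname{supp}(g)$ with $e_f + e_g \equiv m$, hence the coefficient of $x^m$ in $f \ast_\sigma g$ is an empty sum, i.e.\ zero. Therefore every exponent actually occurring in $f \ast_\sigma g$ lies in $\mathbb{S}(f,g)$, which is exactly $\operatorname{supp}(f \ast_\sigma g) \subseteq \mathbb{S}(f,g)$. For part \eqref{lem:sparsity:item2}: taking cardinalities of the inclusion in \eqref{lem:sparsity:item1} gives $\#(f \ast_\sigma g) \le \#\mathbb{S}(f,g)$. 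The bound $\#\mathbb{S}(f,g) \le \#f \cdot \#g$ is immediate because $\mathbb{S}(f,g)$ is the image of the map $\operatorname{supp}(f) \times \operatorname{supp}(g) \to \mathbb{Z}/(p-1)\mathbb{Z}$, $(e_f,e_g)\mapsto e_f + e_g$, whose domain has $\#f\cdot\#g$ elements; reduction modulo $p-1$ and collisions can only decrease the count. Finally $\mathbb{S}(f,g) \subseteq \mathbb{Z}/(p-1)\mathbb{Z}$, a set of size $p-1$, so $\#\mathbb{S}(f,g) \le p-1$. For part \eqref{lem:sparsity:item3}: by \eqref{lem:sparsity:item1} the support of $f\ast_\sigma g$ is a subset of $\mathbb{S}(f,g)$, and it is a proper subset precisely when some $m \in \mathbb{S}(f,g)$ fails to appear in $f \ast_\sigma g$; but for such $m$ the defining sum $\sum_{s+k\equiv m} a_s\sigma^s(b_k)$ is nonempty (since $m \in \mathbb{S}(f,g)$) yet evaluates to $0$, which is by definition a coefficient cancellation. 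So strict inequality forces a cancellation, as claimed.

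I do not anticipate a serious obstacle here; the statement is essentially a bookkeeping consequence of the multiplication formula together with the quotient relation $x^{p-1}=1$. The only mild subtlety is keeping the arithmetic of exponents consistently modulo $p-1$ (so that $\mathbb{S}(f,g)$ is genuinely a subset of $\mathbb{Z}/(p-1)\mathbb{Z}$ and wrap-around is accounted for), and being careful that in \eqref{lem:sparsity:item3} one should note the converse is false — a cancellation need not always occur, which is why the lemma only asserts the one-directional implication. I would write the argument in the order \eqref{lem:sparsity:item1} $\Rightarrow$ \eqref{lem:sparsity:item2} $\Rightarrow$ \eqref{lem:sparsity:item3}, since each part feeds into the next.
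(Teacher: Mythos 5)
Your proof is correct, and in fact the paper does not supply a proof of Lemma~\ref{lem:sparsity} at all (it is stated without a \verb|proof| environment, presumably regarded as an immediate consequence of the multiplication formula and the definition of the sumset). Your bookkeeping argument is exactly the natural one: expanding $\ast_\sigma$, reducing exponents modulo $p-1$, observing that the coefficient of $x^m$ is a sum over pairs $(s,k)\in\operatorname{supp}(f)\times\operatorname{supp}(g)$ with $s+k\equiv m$, and noting that each such summand $a_s\sigma^s(b_k)$ is nonzero because $\sigma$ is an automorphism (so $\sigma^s(b_k)=0$ iff $b_k=0$). That last point is worth making explicit in part \eqref{lem:sparsity:item3}, since it is precisely what forces the vanishing of a nonempty sum to be a genuine \emph{cancellation} rather than a sum of zeros; you rely on it implicitly but it deserves a sentence. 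Otherwise the structure \eqref{lem:sparsity:item1}$\Rightarrow$\eqref{lem:sparsity:item2}$\Rightarrow$\eqref{lem:sparsity:item3}, the reduction of exponents modulo $p-1$, and the image-of-a-product-set argument for $\#\mathbb{S}(f,g)\le\#f\cdot\#g$ are all exactly what the omitted proof would contain.
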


\subsubsection{evaluation of a skew polynomial}\label{sub-sec-1}
According to \eqref{eq:isomorphism}, a skew polynomial $f\in \mathbb{Q}(\beta)[x;\sigma]/(x^{p-1}-1)$ defines a $\mathbb{Q}$-linear mapping $\epsilon(f): \mathbb{Q}(\beta)\rightarrow \mathbb{Q}(\beta)$ obtained by evaluating $f$ at $\sigma$. For simplicity, we denote $\epsilon(f)(b)$ by $f(b)$ for each $b \in \mathbb{Q}(\beta)$.

Next we briefly discuss a fast approach to simultaneously evaluate $f(b)$ for $b \in \mathbb{Q}(\beta)$. Since $\mathbb{Q}(\beta)$ is a $p-1$ dimensional vector space over $\mathbb{Q}$, we may write
\[
b = \sum_{j=1}^{p-1} b_{j}v_j,
\]
where $b_{j} \in \mathbb{Q}$ and $\{v_1,\dots, v_{p-1}\}$ is the normal basis defined in \eqref{eq:normal basis}. In particular, we have $f(v_i)=\sum_{j=1}^{p-1} a_{ij}v_j$ for some $a_{ij}\in \mathbb{Q}$, $1\le i,j \le p - 1$. Thus with respect to the normal basis $\{v_j\}_{j=1}^{p-1}$, the $\mathbb{Q}$-linear map $\epsilon(f)$ can be represented by the $(p-1)\times (p-1)$ matrix over $\mathbb{Q}$:
\[
\epsilon(f) = \begin{bmatrix}
a_{11}&a_{12}&\cdots&a_{1,p-1}\\
a_{21}&a_{22}&\cdots&a_{2,p-1}\\
\vdots&\vdots&\ddots&\vdots\\
a_{p-1,1}&a_{p-1,2}&\cdots&a_{p-1,p-1}\\
\end{bmatrix}.
\]
The above observation enables us to write \eqref{eq:isomorphism} in a more explicit way. Namely, we have
\begin{align}\label{eq:poly-matrix}
  \varphi: \mathbb{Q}(\beta)[x;\sigma]/(x^{p-1}-1)  & \to \mathbb{Q}^{(p-1) \times (p-1)}        \\
   f &\mapsto (a_{ij})_{i,j=1}^{p-1} \nonumber
\end{align}
where $f(v_i)=\sum_{j=1}^{p-1} a_{ij}v_j, 1\le i \le p-1$.

Accordingly, the evaluation of $f$ at $b = \sum_{j=1}^{p-1} b_j v_j \in \mathbb{Q}(\beta)$ can be written as
\[
f(b)= \begin{bmatrix}
b_1& \cdots &b_{p-1}
\end{bmatrix}
\begin{bmatrix}
f(v_1)\\
\vdots\\
f(v_{p-1})\\
\end{bmatrix} = \begin{bmatrix}
b_1&\cdots&b_{p-1}
\end{bmatrix}
\begin{bmatrix}
a_{1,1}&\cdots&a_{1,p-1}\\
\vdots&\ddots&\vdots\\
a_{p-1,1}&\cdots&a_{p-1,p-1}\\
\end{bmatrix}
\begin{bmatrix}
v_1\\
\vdots\\
v_{p-1}\\
\end{bmatrix}.
\]
It is straightforward to verify that for $b_1,\dots, b_t\in \mathbb{Q}(\beta)$, we have
\begin{equation}\label{eq:evaluation}
\left[
\begin{array}{c}
f(b_1) \\
f(b_2) \\
\vdots \\
f(b_t)\\
\end{array}
\right]=
\begin{bmatrix}
b_{1,1}&\cdots&b_{1,p-1}\\
\vdots&\ddots&\vdots\\
b_{t,1}&\cdots&b_{t,p-1}\\
\end{bmatrix}
\begin{bmatrix}
a_{1,1}&\cdots&a_{1,p-1}\\
\vdots&\ddots&\vdots\\
a_{p-1,1}&\cdots&a_{p-1,p-1}\\
\end{bmatrix}
\begin{bmatrix}
v_1 \\
\vdots \\
v_{p-1}\\
\end{bmatrix},
\end{equation}
where  $b_i = \sum_{j=1}^{p-1} b_{ij}v_j$ and $1\le i \le t$. We notice that in \eqref{eq:evaluation}, the two matrices have size of $t\times (p-1)$ and $(p-1)\times (p-1)$ respectively. Thus we have the following:
\begin{lemma}\label{lema:eval}\cite{giesbrecht2020sparse}
The evaluation of $f(b_1),\dots, f(b_t)$ can be done by $\O(t^{\omega-2}p^2)$  operations in $\mathbb{Q}$, where $\omega$ is the exponent of the matrix multiplication.
\end{lemma}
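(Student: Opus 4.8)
The plan is to read off the statement from the evaluation identity \eqref{eq:evaluation} and then invoke the standard reduction of rectangular matrix multiplication to square matrix multiplication. Throughout, every element of $\mathbb{Q}(\beta)$ is represented by its coordinate vector in the normal basis $\{v_1,\dots,v_{p-1}\}$, and the skew polynomial $f$ is represented by its matrix $A \coloneqq \varphi(f) \in \mathbb{Q}^{(p-1)\times(p-1)}$ from \eqref{eq:poly-matrix}. With these conventions, \eqref{eq:evaluation} says precisely that if $B = (b_{ij}) \in \mathbb{Q}^{t\times(p-1)}$ is the matrix whose $i$-th row is the coordinate vector of $b_i$, then the $i$-th row of the product $BA \in \mathbb{Q}^{t\times(p-1)}$ is the coordinate vector of $f(b_i)$. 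Hence the input (the rows of $B$) and the desired output (the rows of $BA$) are both already in the representation we use, and computing $f(b_1),\dots,f(b_t)$ costs exactly one multiplication of a $t\times(p-1)$ matrix by a $(p-1)\times(p-1)$ matrix over $\mathbb{Q}$.

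It remains to bound that cost by $\O(t^{\omega-2}p^2)$. First assume $t \mid (p-1)$; otherwise pad the shared dimension of both $B$ and $A$ with zero entries up to $t\lceil (p-1)/t\rceil$, which since $t \le p-1$ is at most $2(p-1)$, so this changes nothing asymptotically. Partition $B$ into a $1 \times \frac{p-1}{t}$ array of $t\times t$ blocks and $A$ into a $\frac{p-1}{t}\times\frac{p-1}{t}$ array of $t\times t$ blocks. Then $BA$ is assembled from $\bigl(\tfrac{p-1}{t}\bigr)^2$ products of $t\times t$ matrices over $\mathbb{Q}$, plus $\O\!\bigl((\tfrac{p-1}{t})^2 t^2\bigr) = \O(p^2)$ scalar additions to combine the partial sums. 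Each $t\times t$ product takes $\O(t^\omega)$ operations in $\mathbb{Q}$, for a total of $\bigl(\tfrac{p-1}{t}\bigr)^2\cdot \O(t^\omega) = \O(t^{\omega-2}p^2)$; since $\omega \ge 2$ we have $t^{\omega-2}\ge 1$, so the $\O(p^2)$ additions (and the cost of reading the input) are absorbed into this bound.

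I do not expect a real obstacle: the content is the identity \eqref{eq:evaluation} together with the textbook block-tiling estimate for rectangular matrix multiplication, which is exactly what the cited reference \cite{giesbrecht2020sparse} records. The only points needing a little care are (i) stating explicitly that the normal-basis coordinate representation is used, so that extracting $f(b_i)$ from the rows of $BA$ requires no further arithmetic, and (ii) dealing with the divisibility hypothesis $t \mid (p-1)$ by zero-padding and checking that the padding is asymptotically free. An alternative, slightly slicker write-up would simply cite the known complexity $\O(t^{\omega-2}n^2)$ for multiplying a $t\times n$ matrix by an $n\times n$ matrix and apply it with $n = p-1$.
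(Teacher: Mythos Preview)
Your proposal is correct and follows exactly the route the paper indicates: the discussion culminating in \eqref{eq:evaluation} reduces the $t$ evaluations to a single $t\times(p-1)$ by $(p-1)\times(p-1)$ matrix product, and the paper then simply cites \cite{giesbrecht2020sparse} for the $\O(t^{\omega-2}p^2)$ bound, which you have spelled out via the standard block-tiling argument. There is no separate proof in the paper to compare against, so your write-up is a faithful expansion of what the paper leaves implicit.
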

\subsubsection{interpolation of a skew polynomial}
Assume $f\in\mathbb{Q}(\beta)[x;\sigma]$ and $\mathrm{supp}(f)=\{e_1,\dots,e_t\}$. We discuss how to find coefficients of $f$ from $f(1), f(v_1),\dots, f(v_1^{t-1})$.

By assumption, $f$ has the form $f = \sum_{i=1}^t c_i x^{e_i}$. Since
\[
\sigma^k(v^i_1) = (\sigma^k(v_1))^i=(v_{k+1})^i,\quad k \in \mathbb{N},
\]
we have
\[
f(v^i_1)=c_1(v_{e_1+1})^i+\cdots+c_t(v_{e_t+1})^i,\quad i=0,1,\dots,t-1.
\]
Here indexes $e_1+1,\dots, e_t+1$ are are taken modulo $p$. This implies
\begin{equation}\label{eq:interpolation}
\left[
\begin{array}{c}
f(v^0_1) \\
f(v^1_1) \\
\vdots \\
f(v^{t-1}_1)\\
\end{array}
\right]=\left[\begin{array}{cccc}
1&1&\cdots&1\\
v_{e_1+1}&v_{e_2+1}&\cdots&v_{e_t+1}\\
\vdots&\vdots&\ddots&\vdots\\
v^{t-1}_{e_1+1}&v^{t-1}_{e_2+1}&\cdots&v^{t-1}_{e_t+1}\\
\end{array}\right]
\left[
\begin{array}{c}
c_1 \\
c_2 \\
\vdots \\
c_t\\
\end{array}
\right].
\end{equation}
To recover $c_1,\dots,c_t$, we need to solve the linear system \eqref{eq:interpolation}, whose coefficient matrix is Vandermonde. Therefore we obtain the result that follows.
\begin{lemma}\cite{giesbrecht2020sparse}
The interpolation of $f$ can be done by $\O^\sim(tn)$ operations in $\mathbb{Q}$.
\end{lemma}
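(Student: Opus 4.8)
The plan is to read off from \eqref{eq:interpolation} that recovering the coefficients $c_1,\dots,c_t$ of $f$ from $f(v_1^0),\dots,f(v_1^{t-1})$ amounts to solving a single $t\times t$ linear system over the field $\mathbb{Q}(\beta)$ whose coefficient matrix is $M^{\tp}$, where $M=[\alpha_i^{\,k}]_{1\le i\le t,\,0\le k\le t-1}$ is the Vandermonde matrix with nodes $\alpha_i\coloneqq v_{e_i+1}=\beta^{r^{e_i}}$. So there are two things to check: that this system is nonsingular (so that its solution is well defined and equals the coefficient vector of $f$), and that it can be solved in $\O^\thicksim(t)$ arithmetic operations in $\mathbb{Q}(\beta)$. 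Granting the latter, Lemma~\ref{lem:arithmetic operation} --- which converts one operation in $\mathbb{Q}(\beta)$ into $\O^\thicksim(p)=\O^\thicksim(n)$ operations in $\mathbb{Q}$ --- immediately gives the stated bound $\O^\thicksim(tn)$.

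Nonsingularity is elementary: since $0\le e_1<\cdots<e_t\le p-2$ and $r$ is a primitive root of $\mathbb{Z}_p$, the residues $r^{e_1},\dots,r^{e_t}\bmod p$ are pairwise distinct, hence so are the nodes $\alpha_i=\beta^{r^{e_i}}$, which are distinct powers of the primitive $p$-th root $\beta$ and therefore distinct roots of $x^{p-1}+\cdots+x+1$. Consequently the Vandermonde determinant $\prod_{i<j}(\alpha_j-\alpha_i)$ is a nonzero element of $\mathbb{Q}(\beta)$, so \eqref{eq:interpolation} has a unique solution. Note also that, in the normal basis $\{v_1,\dots,v_{p-1}\}$, each node $\alpha_i=v_{e_i+1}$ is simply a standard basis vector and the right-hand side $f(v_1^0),\dots,f(v_1^{t-1})$ is given as input, so no preprocessing is needed before the solve.

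For the quasi-linear solve: $M$ is exactly the matrix of multipoint evaluation of a polynomial of degree $<t$ at $\alpha_1,\dots,\alpha_t$, and $M^{-1}$ is the corresponding interpolation map. Building the subproduct tree of $\prod_{i=1}^t(y-\alpha_i)\in\mathbb{Q}(\beta)[y]$ costs $\O^\thicksim(t)$ operations in $\mathbb{Q}(\beta)$, after which fast multipoint evaluation and fast interpolation each cost $\O^\thicksim(t)$ operations in $\mathbb{Q}(\beta)$; by the transposition (Tellegen) principle, the transposed maps $M^{\tp}$ and $(M^{\tp})^{-1}=(M^{-1})^{\tp}$ can be applied at the same asymptotic cost, which is precisely what is needed to solve \eqref{eq:interpolation}. (Alternatively one may cite a quasi-linear transposed-Vandermonde solver directly.) Crucially, every step here uses only field arithmetic --- including division --- in $\mathbb{Q}(\beta)$ and no special structure of the ground field, so all of it is covered by the $\O^\thicksim(p)$-per-operation estimate of Lemma~\ref{lem:arithmetic operation}.

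The only non-routine point is the appeal to fast univariate polynomial arithmetic over $\mathbb{Q}(\beta)$: one should confirm that the subproduct-tree construction, fast multipoint evaluation and interpolation, and the transposition principle all go through verbatim over an arbitrary field (they do, since they are purely field computations) and that the $\log t$ factors produced by the tree are absorbed by the soft-$\O$. With that in hand the two cost estimates multiply, $\O^\thicksim(t)\cdot\O^\thicksim(n)=\O^\thicksim(tn)$, completing the argument.
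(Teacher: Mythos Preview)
Your argument is correct and matches the paper's approach: the paper simply observes that \eqref{eq:interpolation} is a (transposed) Vandermonde system over $\mathbb{Q}(\beta)$ and cites \cite{giesbrecht2020sparse} for the $\O^\sim(tn)$ bound, without giving further details. Your write-up fleshes out exactly this route---nonsingularity via distinctness of the nodes $v_{e_i+1}$, a quasi-linear transposed-Vandermonde solve over $\mathbb{Q}(\beta)$, and the conversion to $\mathbb{Q}$-operations via Lemma~\ref{lem:arithmetic operation}---so there is nothing to correct.
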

\section{deterministic matrix multiplication via skew polynomials}
\subsection{relation between matrices and skew polynomials}\label{subsec:relation}
Let $\varphi: \mathbb{Q}(\beta)[x;\sigma]/(x^{p-1}-1) \to \mathbb{Q}^{(p-1)\times (p-1)}$ be the $\mathbb{Q}$-algebra isomorphism defined in \eqref{eq:poly-matrix}. In this subsection, we discuss algorithms computing $\varphi$ and $\varphi^{-1}$. According to the definition of $\varphi$ and properties of normal basis, it is easy to establish the following relation between skew polynomials and matrices.
\begin{lemma}\label{lem:relation}
Let $f = \sum_{j=0}^{p - 2} \mu_{j+1} x^j\in \mathbb{Q}(\beta)[x;\sigma]/(x^{p-1}-1)$ be a skew polynomial and let $C = (c_{ij})_{i,j=1}^{p-1}$ be a $(p-1)\times (p-1)$ matrix over $\mathbb{Q}$. Then $\varphi(f) = C$ if and only if
\begin{equation}\label{eq-2}
\begin{bmatrix}
v_1 & v_2 & \cdots & v_{p-1} \\
v_2 & v_3 & \cdots & v_1\\
\vdots & \vdots & \ddots & \vdots\\
v_{p-1} & v_1 & \cdots & v_{p -2}\\
\end{bmatrix}
\begin{bmatrix}
\mu_1 \\
\mu_2 \\
\vdots \\
\mu_{p-1}\\
\end{bmatrix}
= \begin{bmatrix}
c_{11} & c_{12} & \cdots & c_{1,p-1}\\
c_{21} & c_{22} & \cdots & c_{2,p-1}\\
\vdots & \vdots & \ddots & \vdots\\
c_{p-1,1} & c_{p-1,2} & \cdots & c_{p-1,p-1}\\
\end{bmatrix}
\begin{bmatrix}
v_1 \\
v_2 \\
\vdots \\
v_{p-1}\\
\end{bmatrix},
\end{equation}
where $\{v_1,\dots, v_{p-1}\}$ is the normal basis defined in \eqref{eq:normal basis}.
\end{lemma}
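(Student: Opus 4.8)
The plan is to unwind both sides of \eqref{eq-2} into statements about the values $f(v_i)$ and then appeal to the linear independence of the normal basis. First I would record the key computational fact that $\sigma^k(v_i)=v_{i+k}$, where indices are read cyclically modulo $p-1$ in the range $\{1,\dots,p-1\}$; this is immediate from $v_{i+1}=\sigma(v_i)$ together with $\sigma^{p-1}=\mathrm{id}$ (equivalently, it reflects that we are working modulo $x^{p-1}-1$). Writing $f=\sum_{k=0}^{p-2}\mu_{k+1}x^k$, the induced endomorphism is $\varepsilon(f)=\sum_{k=0}^{p-2}\mu_{k+1}\sigma^k$ by \eqref{eq:isomorphism}, so $f(v_i)=\sum_{k=0}^{p-2}\mu_{k+1}v_{i+k}$.

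Next I would identify the two sides of \eqref{eq-2}. The $i$-th row of the circulant-type coefficient matrix on the left is $(v_i,v_{i+1},\dots,v_{i+p-2})$, so pairing it with $(\mu_1,\dots,\mu_{p-1})^{\tp}$ produces exactly $\sum_{k=1}^{p-1}\mu_k v_{i+k-1}=\sum_{k=0}^{p-2}\mu_{k+1}v_{i+k}=f(v_i)$ in the $i$-th slot; hence the left-hand side of \eqref{eq-2} is the column vector $(f(v_1),\dots,f(v_{p-1}))^{\tp}$. On the right-hand side, the $i$-th entry of $C\,(v_1,\dots,v_{p-1})^{\tp}$ is simply $\sum_{j=1}^{p-1}c_{ij}v_j$. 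Therefore \eqref{eq-2} is equivalent to the family of identities $f(v_i)=\sum_{j=1}^{p-1}c_{ij}v_j$ for $i=1,\dots,p-1$.

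Finally I would compare this with the definition of $\varphi$. By \eqref{eq:poly-matrix}, $\varphi(f)=(a_{ij})$ is characterized by $f(v_i)=\sum_{j=1}^{p-1}a_{ij}v_j$. Since $\{v_1,\dots,v_{p-1}\}$ is a (normal) basis of $\mathbb{Q}(\beta)/\mathbb{Q}$ by Lemma~\ref{lem:normal basis}, the coordinates of $f(v_i)$ in this basis are uniquely determined, so $f(v_i)=\sum_j c_{ij}v_j$ for all $i$ holds if and only if $c_{ij}=a_{ij}$ for all $i,j$, i.e. if and only if $C=\varphi(f)$. Combined with the previous paragraph, this gives the claimed equivalence. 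The only point requiring care is the cyclic bookkeeping of indices modulo $p-1$ in the shifted matrix on the left of \eqref{eq-2}; once the convention $\sigma^{p-1}=\mathrm{id}$ (equivalently, reduction modulo $x^{p-1}-1$) is fixed, the rest is a direct expansion.
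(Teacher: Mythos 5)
The proposal is correct and follows exactly the approach the paper implies (the paper itself says only that the lemma follows easily from the definition of $\varphi$ and properties of the normal basis, and does not write out a proof). Your expansion of both sides of \eqref{eq-2} into the coordinate identities $f(v_i)=\sum_j c_{ij}v_j$, followed by uniqueness of coordinates in the normal basis, is precisely the intended argument.
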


\subsubsection{from matrices to skew polynomials}
Given a $(p-1)\times (p-1)$ matrix $C = (c_{ij})_{i,j=1}^{p-1}$ over $\mathbb{Q}$, we may compute the skew polynomial $f \coloneqq \varphi^{-1}(C)$.

%
%
%

\begin{lemma}\label{lm-1}
Let $\{v_i\}_{i=1}^{p-1}$ be as above. We denote
\[
V \coloneqq \begin{bmatrix}
v_1 & v_2 & \cdots & v_{p-1}\\
v_2 & v_3 & \cdots & v_1\\
\vdots & \vdots & \ddots & \vdots\\
v_{p-1} & v_1 & \cdots & v_{p-2}
\end{bmatrix}, \quad
W \coloneqq \begin{bmatrix}
\frac{1}{v_1}-1 & \frac{1}{v_2}-1 & \cdots & \frac{1}{v_{p-1}}-1\\
\frac{1}{v_2}-1 & \frac{1}{v_3}-1 & \cdots & \frac{1}{v_1}-1\\
\vdots & \vdots & \ddots & \vdots\\
\frac{1}{v_{p-1}}-1 & \frac{1}{v_1}-1 & \cdots & \frac{1}{v_{p - 2}}-1.
\end{bmatrix}
\]
Then we have $VW=p I_{p-1}$.
\end{lemma}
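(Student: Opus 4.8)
The plan is to verify the identity $VW = pI_{p-1}$ entrywise. First I would pass to $0$-indexed indices $i,j,k \in \mathbb{Z}/(p-1)$ and set $w_a \coloneqq \beta^{r^a}$, so that $v_{a+1} = w_a$ for every $a$ and both $V$ and $W$ become skew-circulant: the $(i,j)$ entry of $V$ is $w_{i+j}$ and the $(i,j)$ entry of $W$ is $w_{i+j}^{-1} - 1$, with all subscripts read modulo $p-1$. Assume $p$ is odd (the case $p=2$ is treated at the end). The one piece of arithmetic I need is that $r^{(p-1)/2} \equiv -1 \pmod p$: since $r$ is a primitive root, $r^{(p-1)/2}$ is a square root of $1$ in $\mathbb{Z}_p$ other than $1$. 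Writing $h \coloneqq (p-1)/2$, this yields the identity $w_a^{-1} = \beta^{-r^a} = \beta^{r^{a+h}} = w_{a+h}$, which converts the reciprocals appearing in $W$ into a shift of index.

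With this in hand I would expand
\[
(VW)_{ik} \;=\; \sum_{j=0}^{p-2} w_{i+j}\bigl(w_{j+k}^{-1} - 1\bigr) \;=\; \sum_{j=0}^{p-2} w_{i+j}\,w_{j+k+h} \;-\; \sum_{j=0}^{p-2} w_{i+j}.
\]
The second sum equals $\sum_{a=0}^{p-2} w_a = \sum_{m=1}^{p-1} \beta^m = -1$, since $\{r^a \bmod p : 0 \le a \le p-2\} = \{1,\dots,p-1\}$ and $1 + \beta + \dots + \beta^{p-1} = 0$; thus it always contributes $+1$ to the entry. For the first sum, using $r^h \equiv -1$ again,
\[
w_{i+j}\,w_{j+k+h} \;=\; \beta^{\,r^{i+j} + r^{j+k+h}} \;=\; \beta^{\,r^{j}(r^{i} - r^{k})}.
\]
If $i = k$, every summand is $\beta^0 = 1$, so the first sum is $p-1$ and $(VW)_{ii} = (p-1) + 1 = p$. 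If $i \ne k$, then $c \coloneqq r^i - r^k$ is a unit modulo $p$ (distinct powers of a primitive root are incongruent mod $p$), so as $j$ runs over $\mathbb{Z}/(p-1)$ the residues $r^j c$ run over all of $\mathbb{Z}_p^\times$; hence the first sum is again $\sum_{m=1}^{p-1}\beta^m = -1$ and $(VW)_{ik} = -1 + 1 = 0$. Combining the two cases gives $VW = pI_{p-1}$.

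I do not anticipate a real obstacle here: once the reindexing $v_{a+1} = w_a$ and the identity $r^{(p-1)/2}\equiv -1$ are set up, the proof reduces to the observation that multiplying a complete residue system mod $p$ by a unit again gives a complete residue system, together with the cyclotomic relation $\sum_{m=1}^{p-1}\beta^m = -1$. The only thing demanding care is keeping two moduli straight — indices of the $w$'s live in $\mathbb{Z}/(p-1)$, while exponents of $\beta$ live in $\mathbb{Z}/p$ — and disposing of the degenerate case $p = 2$ directly, where $V = [\beta] = [-1]$, $W = [1/\beta - 1] = [-2]$, and $VW = [2] = pI_1$.
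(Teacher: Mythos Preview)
Your proof is correct and follows essentially the same route as the paper: both compute the $(i,k)$ entry of $VW$ directly, split it into a sum of products $v_{i+j}/v_{j+k}$ minus $\sum_j v_{i+j} = -1$, and then use that when $i\neq k$ the exponents $r^j(r^i - r^k)$ sweep out all of $\mathbb{Z}_p^\times$ so the first sum collapses to $-1$. The only cosmetic difference is that you pass through the shift identity $w_a^{-1}=w_{a+h}$ (forcing a separate treatment of $p=2$), whereas the paper writes $v_{i+k}/v_{j+k}=\sigma^k(v_i/v_j)$ and identifies $v_i/v_j$ with some $v_m$ directly, which handles all primes uniformly.
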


\begin{proof}
We proceed by directly computing the $(i,j)$-th entry $a_{ij}$ of $VW$. We want to prove that
\[
a_{ij}= p \delta_{ij},
\]
where $\delta_{ij}$ is the Kronecker delta. We observe that the $i$-th row of $V$ is $(v_i,v_{i+1},\dots,v_{i+n-1})$ and $j$-th column of $W$ is $(\frac{1}{v_j}-1,\frac{1}{v_{j+1}}-1,\dots,\frac{1}{v_{j+p-2}}-1)^T$. Thus we have
\[
a_{ij}=\sum_{k=0}^{p-2}v_{i+k} \left( \frac{1}{v_{j+k}}-1 \right)=\sum_{k=0}^{p-2}\frac{v_{i+k}}{v_{j+k}}-\sum_{k=0}^{p-2}
v_{i+k}  =\sum_{k=0}^{p - 2}\sigma^k\left( \frac{v_i}{v_j} \right)-\sum_{k=0}^{p - 2}v_k.
\]
Here indexes are taken modulo $p$.
Since $\sum_{j=0}^{p-1} \beta^{j} = 0$, we conclude that for each $i\in \mathbb{Z}$,
\[
\sum_{k=0}^{p-2}v_{i+k}=\sum_{k=0}^{p-2}v_{k} = -1.
\]
%

If $i=j$, then $a_{ij}= \left( \sum_{k=0}^{p - 2}\sigma^k(1) \right)+1= p$.
If $i\neq j$, then $\frac{v_i}{v_j}=\beta^{r^{i-1}-r^{j-1}}\neq 0$. As $r$ is a primitive root of $\mathbb{Z}_p$, there exists an $m\in\{1,2,\dots,p-1\}$ such that  $r^{m-1}=r^{i-1}-r^{j-1} \pmod{p}$. Hence $\frac{v_i}{v_j}=v_{m}$ and
\[
a_{ij}=\left(  \sum_{k=0}^{p - 2}\sigma^k(v_m) \right) +1=\left(  \sum_{k=0}^{p - 2}v_{m+k} \right) +1 =0.
\]
\end{proof}

By a combination of Lemmas \ref{lem:relation} and \ref{lm-1}, we derive the following formula for $\varphi^{-1}(C)$.
\begin{lemma}
Let $C = (c_{ij})_{i,j=1}^{p-1}$ be a $(p-1)\times (p-1)$ matrix over $\mathbb{Q}$. Then $\varphi^{-1}(C) = \sum_{j=0}^{p - 2} \mu_{j+1} x^j$ where
\begin{equation}\label{eq-1}
\begin{bmatrix}
\mu_1 \\
\mu_2 \\
\vdots \\
\mu_{p-1}
\end{bmatrix}
=\frac{1}{p}
\begin{bmatrix}
\frac{1}{v_1}-1 & \frac{1}{v_2}-1 & \cdots & \frac{1}{v_{p-1}}-1\\
\frac{1}{v_2}-1 & \frac{1}{v_3}-1 & \cdots & \frac{1}{v_1}-1\\
\vdots & \vdots & \ddots & \vdots\\
\frac{1}{v_{p-1}}-1 & \frac{1}{v_1}-1 & \cdots & \frac{1}{v_{p-2}}-1
\end{bmatrix}
\begin{bmatrix}
c_{11} & c_{12} & \cdots & c_{1,p-1}\\
c_{21} & c_{22} & \cdots & c_{2,p-1}\\
\vdots & \vdots & \ddots & \vdots\\
c_{p-1,1} & c_{p-1,2} & \cdots & c_{p-1,p-1}
\end{bmatrix}
\begin{bmatrix}
v_1 \\
v_2 \\
\vdots \\
v_{p-1}
\end{bmatrix}.
\end{equation}
\end{lemma}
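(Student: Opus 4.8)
The plan is to obtain \eqref{eq-1} as an immediate consequence of Lemmas~\ref{lem:relation} and \ref{lm-1}. Since $\varphi$ is a $\mathbb{Q}$-algebra isomorphism, there is a unique skew polynomial $f = \varphi^{-1}(C)$, and, writing it in the standard form $f = \sum_{j=0}^{p-2}\mu_{j+1}x^{j}$ of degree at most $p-2$, Lemma~\ref{lem:relation} asserts that the equality $\varphi(f) = C$ is equivalent to the matrix identity \eqref{eq-2}, namely $V\,[\mu_1,\dots,\mu_{p-1}]^{\tp} = C\,[v_1,\dots,v_{p-1}]^{\tp}$ with $V$ the circulant-type matrix of Lemma~\ref{lm-1}. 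So the content of the lemma is simply to solve this linear system for $[\mu_1,\dots,\mu_{p-1}]^{\tp}$.

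First I would record that $V$ is invertible over the field $\mathbb{Q}(\beta)$ with $V^{-1} = \tfrac1p W$: Lemma~\ref{lm-1} gives $V\bigl(\tfrac1p W\bigr) = I_{p-1}$, and for square matrices over a commutative ring a one-sided inverse is automatically two-sided, so in particular $WV = pI_{p-1}$ as well. Then I would left-multiply \eqref{eq-2} by $\tfrac1p W$: the left-hand side collapses to $[\mu_1,\dots,\mu_{p-1}]^{\tp}$ by the relation $WV = pI_{p-1}$, while the right-hand side is precisely the right-hand side of \eqref{eq-1}. This yields \eqref{eq-1}; conversely, left-multiplying \eqref{eq-1} by $V$ and using $VW = pI_{p-1}$ recovers \eqref{eq-2}, so the two systems are equivalent, which gives the claimed formula for $\varphi^{-1}(C)$.

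I do not expect any genuine obstacle, as the whole argument is the composition of two already-established lemmas with one elementary linear-algebra remark. The only point deserving an extra sentence is the passage from the one-sided identity $VW = pI_{p-1}$ provided by Lemma~\ref{lm-1} to the left-multiplication by $W$ used above; this is legitimate because a right inverse of a square matrix over a commutative ring is a genuine inverse (seen at once by taking determinants), so $\tfrac1p W$ is also a left inverse of $V$.
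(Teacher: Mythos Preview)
Your proposal is correct and matches the paper's approach exactly: the paper simply states that the formula follows ``by a combination of Lemmas~\ref{lem:relation} and~\ref{lm-1}'' without further detail, and your write-up spells out precisely this combination. Your extra remark that the one-sided identity $VW = pI_{p-1}$ upgrades to a two-sided one (since $V$ is square over a field) is a welcome clarification the paper leaves implicit.
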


\begin{prop}\label{pro-1}
The skew polynomial $\varphi^{-1}(C)$ can be computed by $\O^\sim(p^2)$ operations in $\mathbb{Q}$.
\end{prop}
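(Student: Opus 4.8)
The plan is to read off formula~\eqref{eq-1} and bound the cost of its three factors from right to left. Write $v \coloneqq (v_1,\dots,v_{p-1})^\tp \in \mathbb{Q}(\beta)^{p-1}$ and set $u \coloneqq C v$, so that $u_i = \sum_{j=1}^{p-1} c_{ij} v_j$. Since $c_{ij} \in \mathbb{Q}$ and $\{v_1,\dots,v_{p-1}\}$ is precisely the normal basis used to represent elements of $\mathbb{Q}(\beta)$, the element $u_i$ is nothing but the element of $\mathbb{Q}(\beta)$ whose coordinate vector is the $i$-th row of $C$; thus $u$ is obtained with no field arithmetic at all, only $\O(p^2)$ data moves. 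It remains to bound the cost of left-multiplying $u$ by the matrix $W$ of Lemma~\ref{lm-1} and then dividing by the integer $p$, which costs $\O(p^2)$ operations in $\mathbb{Q}$ on the $p-1$ resulting entries of $\mathbb{Q}(\beta)$.

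The key point is that $W$ has circulant-type structure: its $(i,k)$ entry equals $\frac{1}{v_{i+k-1}} - 1$, where the subscript is reduced modulo $p-1$ into $\{1,\dots,p-1\}$ (recall $v_m = \beta^{r^{m-1}}$ depends only on $m \bmod (p-1)$, since $r$ has order $p-1$ in $\mathbb{Z}_p^\times$). Setting $w_m \coloneqq \frac{1}{v_m} - 1$ and $\hat u_k \coloneqq u_{2-k}$ (subscripts mod $p-1$), one gets $(Wu)_i = \sum_{k=1}^{p-1} w_{i+k-1}\, u_k = \sum_{k=1}^{p-1} w_{i+1-k}\, \hat u_k = (w \star \hat u)_{i+1}$, i.e.\ an ordinary cyclic convolution of length $p-1$ of the two vectors $(w_m)$ and $(\hat u_k)$, both with entries in $\mathbb{Q}(\beta)$. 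Such a convolution can be carried out with $\O^\sim(p)$ arithmetic operations in $\mathbb{Q}(\beta)$ by any fast polynomial-multiplication algorithm valid over an arbitrary commutative ring; by Lemma~\ref{lem:arithmetic operation} each such operation costs $\O^\sim(p)$ operations in $\mathbb{Q}$, giving $\O^\sim(p^2)$ for the product $Wu$. Precomputing the vector $(w_m)$ is also cheap: using $-1 \equiv r^{(p-1)/2} \pmod p$ one checks that $\tfrac{1}{v_m} = \beta^{-r^{m-1}} = v_{m + (p-1)/2}$, so each $w_m$ is a normal-basis element shifted by the all-ones coordinate vector $(1,\dots,1)$ and is obtained with $\O(p^2)$ bookkeeping in total. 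Adding the three contributions yields the claimed bound $\O^\sim(p^2)$.

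The only genuinely non-routine step is the middle one: observing that the dense $(p-1)\times(p-1)$ matrix $W$ acts as a cyclic convolution, so that the product $Wu$ — which would cost $\O^\sim(p^3)$ if performed naively over $\mathbb{Q}(\beta)$ — reduces to $\O^\sim(p)$ operations in $\mathbb{Q}(\beta)$. The attendant index bookkeeping (exponents of $\beta$ taken mod $p$, subscripts of the $v_i$ taken mod $p-1$) also requires a little care. Everything else is a direct cost accounting against Lemmas~\ref{lem:relation}, \ref{lm-1} and~\ref{lem:arithmetic operation}.
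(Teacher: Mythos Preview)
Your proof is correct and follows the same approach the paper has in mind: compute $u=Cv$ for free (since the $v_j$ are the basis elements themselves), then exploit the circulant-type structure of $W$ to do the product $Wu$ as a length-$(p-1)$ cyclic convolution over $\mathbb{Q}(\beta)$ in $\O^\sim(p)$ field operations, i.e.\ $\O^\sim(p^2)$ over $\mathbb{Q}$. The paper's own proof is a one-liner (``\eqref{eq-1} only involves the $(p-1)\times(p-1)$ matrix vector product''), but the companion proof of Proposition~\ref{pro-2} makes explicit that the circulant structure of $V$ (hence of $W=pV^{-1}$) is what is being invoked; your write-up simply spells out the details, including the index reversal $\hat u_k=u_{2-k}$ needed because $W_{ik}=w_{i+k-1}$ is an anti-circulant rather than a standard circulant.
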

\begin{proof}
Since \eqref{eq-1} only involves the $(p-1)\times (p-1)$ matrix vector product, the complexity is at most $\O^\sim(p^2)$.
\end{proof}

\subsubsection{from skew polynomials to matrices}
Assume $f\in\mathbb{Q}(\beta)[x;\sigma]/(x^{p-1}-1)$ has the form
\begin{equation}\label{eq-4}
f:=\mu_1+\mu_2x+\mu_3x^2+\cdots+\mu_{p-1}x^{p-2}.
\end{equation}

Now we present the procedure $\varphi$ that transforms the skew polynomial into the matrix in pseudocode.

\begin{algorithm}[!ht]
\caption{matrices to skew polynomials}
\label{alg-2}
\begin{algorithmic}[1]
\renewcommand{\algorithmicrequire}{\textbf{Input}:}\Require skew polynomial $f = \sum_{j=0}^{p-2} \mu_{j+1} x^j$ and normal basis $\{v_1,\dots,v_{p-1}\}$ of $\mathbb{Q}(\beta)/\mathbb{Q}$.

\renewcommand{\algorithmicensure}{\textbf{Output}:}\Ensure the matrix $\varphi(f) = (c_{ij})_{i,j=1}^{p-1}$.

\State \label{alg-2:step1} compute
\[
\begin{bmatrix}
\beta_1 \\
\beta_2 \\
\vdots \\
\beta_{p-1}\\
\end{bmatrix} =
\begin{bmatrix}
v_1&v_2&\cdots&v_{p-1}\\
v_2&v_3&\cdots&v_1\\
\vdots&\vdots&\ddots&\vdots\\
v_{p-1}&v_1&\cdots&v_{p-2}\\
\end{bmatrix}
\begin{bmatrix}
\mu_1 \\
\mu_2 \\
\vdots \\
\mu_{p-1}\\
\end{bmatrix}.
\]
\State \label{alg-2:step2} compute $c_{ij}\in \mathbb{Q},1\le i,j \le p-1$ such that
\[
\begin{bmatrix}
\beta_1 \\
\beta_2 \\
\vdots \\
\beta_{p-1}\\
\end{bmatrix}
= \begin{bmatrix}
c_{11} & c_{12} & \cdots & c_{1,p-1}\\
c_{21} & c_{22} & \cdots & c_{2,p-1}\\
\vdots & \vdots & \ddots & \vdots\\
c_{p-1,1} & c_{p-1,2} & \cdots & c_{p-1,p-1}\\
\end{bmatrix}
\begin{bmatrix}
v_1 \\
v_2 \\
\vdots \\
v_{p-1}\\
\end{bmatrix},
\]
\end{algorithmic}
\end{algorithm}

\begin{prop}\label{pro-2}
Algorithm \ref{alg-2} returns $\varphi(f)$ using  $\O^\sim(p^2)$ operations in
$\mathbb{Q}$.
\end{prop}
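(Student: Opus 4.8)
The plan is to first check that Algorithm~\ref{alg-2} is correct and then to bound the cost of its two steps separately, the total being dominated by Step~\ref{alg-2:step1}. For \emph{correctness} I would invoke Lemma~\ref{lem:relation}: a rational matrix $C=(c_{ij})$ equals $\varphi(f)$ if and only if \eqref{eq-2} holds, i.e.\ $V\mu=Cv$, where $v=(v_1,\dots,v_{p-1})^\tp$ and $V$ is the coefficient matrix of Step~\ref{alg-2:step1}. Step~\ref{alg-2:step1} produces exactly the vector $(\beta_1,\dots,\beta_{p-1})^\tp=V\mu\in\mathbb Q(\beta)^{p-1}$, and Step~\ref{alg-2:step2} produces $c_{ij}\in\mathbb Q$ with $\beta_i=\sum_{j=1}^{p-1}c_{ij}v_j$. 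Since $\{v_1,\dots,v_{p-1}\}$ is a $\mathbb Q$-basis of $\mathbb Q(\beta)$ by Lemma~\ref{lem:normal basis}, each $\beta_i$ has a unique such expansion, so the matrix $C$ returned by Step~\ref{alg-2:step2} is well defined, satisfies \eqref{eq-2}, and therefore equals $\varphi(f)$.

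For the cost of \emph{Step~\ref{alg-2:step1}}, observe first that $\beta_i=\sum_{j=0}^{p-2}\mu_{j+1}v_{i+j}$ (with the index of $v$ read cyclically in $\{1,\dots,p-1\}$) is exactly $f(v_i)$, the evaluation of $f$ at the $i$-th normal basis element; applying Lemma~\ref{lema:eval} with $t=p-1$ would only give $\O(p^\omega)$. The key point is instead that $V$ is circulant --- each row is the cyclic shift by one of the previous row, so its $(i,k)$-entry is $v_{i+k-1}$ --- whence $(\beta_1,\dots,\beta_{p-1})^\tp=V\mu$ is a length-$(p-1)$ cyclic correlation of the sequences $(v_1,\dots,v_{p-1})$ and $(\mu_1,\dots,\mu_{p-1})$ over $\mathbb Q(\beta)$; in particular $V$ never has to be formed. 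Such a correlation, equivalently a cyclic convolution after reversing one argument, of length $p-1$ over the finite-dimensional commutative $\mathbb Q$-algebra $\mathbb Q(\beta)$ can be computed with $\O^\sim(p)$ operations in $\mathbb Q(\beta)$ via a fast polynomial product of degree $<p-1$ over $\mathbb Q(\beta)$; by Lemma~\ref{lem:arithmetic operation} each of these costs $\O^\sim(p)$ operations in $\mathbb Q$, so Step~\ref{alg-2:step1} runs in $\O^\sim(p^2)$ operations in $\mathbb Q$.

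For the cost of \emph{Step~\ref{alg-2:step2}}, each of the $p-1$ field elements $\beta_i$ must be written in the normal basis; a literal change of basis costs $\O(p^2)$ per element, hence $\O(p^3)$ in total, which is too slow. Instead I would use that each $v_j$ is, by \eqref{eq:normal basis}, a power $\beta^{a_j}$ with $\{a_1,\dots,a_{p-1}\}=\{1,\dots,p-1\}$, so that the normal basis is merely a permutation of $\{\beta,\beta^2,\dots,\beta^{p-1}\}$. If Step~\ref{alg-2:step1} is carried out in the normal-basis representation of $\mathbb Q(\beta)$, Step~\ref{alg-2:step2} is immediate; otherwise, if the $\beta_i$ arrive in the power basis $\{1,\beta,\dots,\beta^{p-2}\}$, one first rewrites each $\beta_i$ over $\{\beta,\dots,\beta^{p-1}\}$ using the relation $1=-(\beta+\cdots+\beta^{p-1})$ (cost $\O(p)$ per element) and then permutes coordinates, for a total of $\O(p^2)$ operations in $\mathbb Q$. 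Combining the two steps gives the claimed bound $\O^\sim(p^2)$.

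The correctness part is routine. The two substantive points are (i) spotting and exploiting the circulant structure of $V$, so that Step~\ref{alg-2:step1} reduces to a length-$(p-1)$ convolution over $\mathbb Q(\beta)$ rather than $\Theta(p^2)$ scalar multiplications there (which would give only $\O^\sim(p^3)$), and (ii) recognizing that passing to the normal basis is just a coordinate permutation combined with the trivial relation $1+\beta+\cdots+\beta^{p-1}=0$, so Step~\ref{alg-2:step2} stays at $\O(p^2)$ rather than requiring a genuine $\O(p^\omega)$ basis change. I expect (i) to be the main obstacle, together with recording the standard fact that FFT-based convolution over a finite-dimensional commutative $\mathbb Q$-algebra uses only quasi-linearly many ring operations, which is what legitimizes the bound in Step~\ref{alg-2:step1}.
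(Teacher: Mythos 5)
Your proof is correct and rests on exactly the same key observation as the paper's: the coefficient matrix in Step~\ref{alg-2:step1} is (anti-)circulant, so the matrix--vector product is a cyclic convolution computable in $\O^\sim(p)$ operations over $\mathbb{Q}(\beta)$, hence $\O^\sim(p^2)$ over $\mathbb{Q}$ via Lemma~\ref{lem:arithmetic operation}. You are somewhat more thorough than the paper, which tacitly omits the correctness check and the cost of the basis-rewriting in Step~\ref{alg-2:step2}; your point that the normal basis is just a permutation of $\{\beta,\dots,\beta^{p-1}\}$ plus the relation $1=-(\beta+\cdots+\beta^{p-1})$, so Step~\ref{alg-2:step2} costs only $\O(p^2)$, is a useful detail the paper leaves implicit.
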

\begin{proof}
Since the matrix in Step~\ref{alg-2:step1} is a circulant matrix, we compute the product in time $\O^\sim(p)$ over $\mathbb{Q}(\beta)$, which is $\O^\sim(p^2)$ operations in $\mathbb{Q}$.
\end{proof}
\subsection{matrix multiplication via skew polynomials}
Now we are ready to present our algorithm for matrix multiplication. The main idea is to reduce the matrix multiplication to the skew polynomial multiplication.

\begin{algorithm}[!ht]
\caption{matrix multiplication via skew polynomials}
\label{alg-3}
\begin{algorithmic}[1]
\renewcommand{\algorithmicrequire}{\textbf{Input}:}\Require matrices $A,B\in \mathbb{Q}^{(p-1) \times (p-1)}$, where $p$ is a prime.

\renewcommand{\algorithmicensure}{\textbf{Output}:}\Ensure the product $A B$.

\State find a $p$-th primitive root of unity $\beta$.

\State find a primitive root $r$ of $\mathbb{Z}_{p}$ and compute $v_i=\beta^{r^{i-1}}, i=1,\dots,p-1$. 

\State compute $\varphi^{-1}(A),\varphi^{-1}(B)$ by \eqref{eq-1} w.r.t the normal basis $\{v_1,\dots,v_{p-1}\}$. \label{alg-3:step3}

\State compute the sumset $\mathbb{S}(\varphi^{-1}(A),\varphi^{-1}(B)) = \{e_1,e_2,\dots,e_t\}$. \label{alg-3:step4}

\State compute $\mathrm{Eval}_i:=\left( \varphi^{-1}(A)\ast_{\sigma} \varphi^{-1}(B)\right) (v^i_1),i=0,1,\dots,t-1$. \label{alg-3:step5}

\State solve
\begin{equation*}
\begin{bmatrix}
\mathrm{Eval}_0 \\
\mathrm{Eval}_1 \\
\vdots \\
\mathrm{Eval}_{t-1}\\
\end{bmatrix}
=
\begin{bmatrix}
1&1&\cdots&1\\
v_{e_1+1}&v_{e_2+1}&\cdots&v_{e_t+1}\\
\vdots&\vdots&\ddots&\vdots\\
v^{t-1}_{e_1+1}&v^{t-1}_{e_2+1}&\cdots&v^{t-1}_{e_t+1}
\end{bmatrix}
\begin{bmatrix}
c_1 \\
c_2 \\
\vdots \\
c_t
\end{bmatrix}
\end{equation*}
and set $f \coloneqq c_1x^{e_1}+\cdots+c_tx^{e_t}$. \label{alg-3:step6}

\State compute $\varphi(f)$ by Algorithm~\ref{alg-2}. \label{alg-3:step7}
\end{algorithmic}
\end{algorithm}

\begin{proposition}\label{prop:matrix multiplication}
Algorithm~\ref{alg-3} computes $AB$ using  $\O^\sim(t^{\omega-2}p^2)$ operations in
$\mathbb{Q}$, where $t$ is the cardinality of the sumset of $\varphi^{-1}(A)$ and $\varphi^{-1}(B)$.
\end{proposition}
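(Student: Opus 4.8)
The plan is to bound the cost of Algorithm~\ref{alg-3} step by step and observe that the dominant term is $\O^\sim(t^{\omega-2}p^2)$. Steps 1 and 2 are preprocessing: finding a $p$-th primitive root of unity and a primitive root of $\mathbb{Z}_p$, then computing $v_1,\dots,v_{p-1}$ by successive exponentiation; each $v_i$ is a power of $\beta$, so this costs $\O^\sim(p^2)$ operations in $\mathbb{Q}$ (it is $\O(p)$ arithmetic operations in $\mathbb{Q}(\beta)$, each of which is $\O^\sim(p)$ operations in $\mathbb{Q}$ by Lemma~\ref{lem:arithmetic operation}). Step~\ref{alg-3:step3} applies Proposition~\ref{pro-1} twice, hence $\O^\sim(p^2)$. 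Step~\ref{alg-3:step4} computes the sumset $\mathbb{S}(\varphi^{-1}(A),\varphi^{-1}(B))$; since each of $\varphi^{-1}(A),\varphi^{-1}(B)$ has support of size at most $p-1$ and the sumset has at most $\min(\#f\cdot\#g,\,p-1)$ elements by Lemma~\ref{lem:sparsity}\eqref{lem:sparsity:item2}, this is at most $\O(p^2)$ integer additions modulo $p-1$, i.e. $\O^\sim(p^2)$.

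The crucial step is Step~\ref{alg-3:step5}, the simultaneous evaluation of the product $h \coloneqq \varphi^{-1}(A)\ast_\sigma \varphi^{-1}(B)$ at $v_1^0,\dots,v_1^{t-1}$. The point is that one never forms $h$ explicitly; instead one uses that $h(b)=\varphi^{-1}(A)\bigl(\varphi^{-1}(B)(b)\bigr)$ as $\mathbb{Q}$-linear maps, and applies Lemma~\ref{lema:eval} twice: first evaluate $\varphi^{-1}(B)$ at the $t$ points $v_1^0,\dots,v_1^{t-1}$ in $\O(t^{\omega-2}p^2)$ operations, obtaining $t$ elements of $\mathbb{Q}(\beta)$, then evaluate $\varphi^{-1}(A)$ at those $t$ elements in another $\O(t^{\omega-2}p^2)$ operations. (Here one must note $t\le p-1$, so $t\times(p-1)$ by $(p-1)\times(p-1)$ matrix products are covered by the bound in Lemma~\ref{lema:eval}; one should also record that reading off the matrix of $\varphi^{-1}(B)$ as in \eqref{eq:poly-matrix} is itself within budget, or observe that Lemma~\ref{lema:eval} already accounts for it.) Step~\ref{alg-3:step6} solves a $t\times t$ Vandermonde system with nodes $v_{e_1+1},\dots,v_{e_t+1}\in\mathbb{Q}(\beta)$: this costs $\O^\sim(t)$ operations in $\mathbb{Q}(\beta)$, hence $\O^\sim(tp)$ operations in $\mathbb{Q}$, which is absorbed; one should remark that the nodes are distinct since the $e_i$ are distinct residues modulo $p-1$ and $i\mapsto v_{i+1}$ is injective, so the Vandermonde matrix is invertible and $f$ is well defined. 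Finally Step~\ref{alg-3:step7} invokes Proposition~\ref{pro-2}, costing $\O^\sim(p^2)$.

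Summing, every step is $\O^\sim(p^2)$ except Step~\ref{alg-3:step5}, which is $\O(t^{\omega-2}p^2)$; since $\omega\ge 2$ and $t\ge 1$ we have $p^2\le t^{\omega-2}p^2$ only when we are cavalier, so more carefully the total is $\O^\sim\!\bigl(t^{\omega-2}p^2+p^2\bigr)=\O^\sim(t^{\omega-2}p^2)$ when $t^{\omega-2}\ge 1$, and otherwise (if $\omega=2$) the bound reads $\O^\sim(p^2)$, consistent with the statement. The remaining obligation is correctness: by the isomorphism $\varphi$ of \eqref{eq:poly-matrix}, $\varphi^{-1}(AB)=\varphi^{-1}(A)\ast_\sigma\varphi^{-1}(B)=h$; by Lemma~\ref{lem:sparsity}\eqref{lem:sparsity:item1}, $\mathrm{supp}(h)\subseteq\mathbb{S}(\varphi^{-1}(A),\varphi^{-1}(B))=\{e_1,\dots,e_t\}$, so writing $h=\sum_{i=1}^t c_i x^{e_i}$ (some $c_i$ possibly zero) the evaluations in \eqref{eq:interpolation} determine the $c_i$ uniquely via the invertible Vandermonde system; hence $f=h=\varphi^{-1}(AB)$ and $\varphi(f)=AB$. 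The main obstacle is the accounting in Step~\ref{alg-3:step5}: one must be careful that the two applications of Lemma~\ref{lema:eval} compose correctly — the output of the first application is $t$ field elements given in the normal basis, which are exactly the inputs the second application expects — and that no hidden $\O(p)$-factor from conversions between skew-polynomial and matrix representations sneaks in outside the claimed bound.
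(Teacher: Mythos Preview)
Your proof is correct and follows essentially the same approach as the paper: a step-by-step complexity analysis showing that Step~\ref{alg-3:step5} dominates, together with correctness via the isomorphism $\varphi$ and Lemma~\ref{lem:sparsity}\eqref{lem:sparsity:item1}. You actually supply more detail than the paper does---in particular, your explanation that Step~\ref{alg-3:step5} is handled by two successive applications of Lemma~\ref{lema:eval} (first to $\varphi^{-1}(B)$, then to $\varphi^{-1}(A)$ at the resulting values), and your remark on the invertibility of the Vandermonde system in Step~\ref{alg-3:step6}, make explicit points the paper leaves to the reader or to the cited reference.
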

\begin{proof}
If we denote by $m$ (resp. $\ast_\sigma$) the matrix (resp. skew-polynomial) multiplication, then we have the following commutative diagram, which validates Algorithm~\ref{alg-3}.
\begin{figure}[!ht]
  \begin{tikzcd}[row sep= 10ex, column sep=large]
     \mathbb{Q}^{(p-1)\times (p-1)} \times \mathbb{Q}^{(p-1)\times (p-1)}   \arrow{d}{m} \arrow{r}{\varphi^{-1}\times \varphi^{-1}} & \mathbb{Q}(\beta)[x;\sigma]/(x^{p-1}-1) \times \mathbb{Q}(\beta)[x;\sigma]/(x^{p-1}-1) \arrow{d}{\ast_{\sigma}}   \\
     \mathbb{Q}^{(p-1)\times (p-1)}   & \mathbb{Q}(\beta)[x;\sigma]/(x^{p-1}-1) \arrow{l}{\varphi}
  \end{tikzcd}
\end{figure}

Next we analyse the complexity of Algorithm~\ref{alg-3}.
If $r$ is a primitive root of $p$, then $r^1,r^2,\dots, r^{p-1}\pmod{p}$ must generate distinct integers from $1$ to $(p-1)$. Checking the membership of a number in $\{2,\dots,p-2\}$ has complexity at most $\O(p^2)$ over $\mathbb{Q}$.
  By Proposition \ref{pro-1}, the complexity of Step~\ref{alg-3:step3} in Algorithm~\ref{alg-3} is $\O^\sim(p^2)$  over $\mathbb{Q}$.
  It is straightforward to verify that the complexity of Step~\ref{alg-3:step4} is $\O(p^2)$ over $\mathbb{Q}$.
  The cost of Step~\ref{alg-3:step5} is $\O^\sim(t^{\omega-2}p^2)$ $\mathbb{Q}$-operations by Lemma~\ref{lema:eval}.
  In Step~\ref{alg-3:step6}, since the coefficient matrix is a Vandermonde matrix, solving this linear system costs $\O^\sim(tp)$ $\mathbb{Q}$-operations~\cite{kaltofen1988improved}.
 By Proposition~\ref{pro-2}, Step~\ref{alg-3:step7} costs $\O^\thicksim(p^2)$ operations in $\mathbb{Q}$.
\end{proof}

\subsection{skew-sparse matrix multiplication}
Suppose $p$ is a prime number and $r$ is a primitive root of $\mathbb{Z}_{p}$. Let $1 \le k \le p-1$ be the integer such that $r^{k-1}\equiv p-1 \pmod{p}$. For each $1 \le j\ne k \le p-1$, we denote by $s_j$ the integer between $1$ and $(p-1)$ such that  $r^{s_j-1} \equiv r^{j-1} + 1 \pmod{p}$. By Lemma~\ref{lem:normal basis}, $\{v_j\}_{j =1}^{p-1}$ is a normal basis of $\mathbb{Q}(\beta)/\mathbb{Q}$ where $v_i = \beta^{r^{j -1}}, 1 \le j \le p-1$.

We construct two $(p-1) \times (p-1)$ matrices
\[
X \coloneqq \begin{bmatrix}
0 & 1 & 0  & \cdots & 0 \\
0 & 0 & 1  & \cdots & 0 \\
\vdots & \vdots & \vdots & \ddots & \vdots \\
0 & 0 & 0 & \cdots & 1 \\
1 & 0 & 0 & \cdots & 0
\end{bmatrix},\quad Y \coloneqq \begin{bmatrix}
E_{s_1} \\
\vdots \\
E_{s_{k-1}} \\
J \\
E_{s_{k+1}} \\
\vdots \\
E_{s_{p-1}} \\
\end{bmatrix},
\]
where $E_{i}$ is the row vector whose elements are all $0$ except the $i$-th, which is equal to $1$, and $J$ is the row vector whose elements are all equal to $-1$.

\begin{lemma}\label{lm-2}
The set
\[
\{ X^iY^j: 0 \le i \le p -2, 1 \le j \le p -1\}
\]
is a $\mathbb{Q}$-basis of $\mathbb{Q}^{(p-1) \times (p-1)}$. In particular, $X$ and $Y$ are generators of $\mathbb{Q}^{(p-1) \times (p-1)}$ as a $\mathbb{Q}$-algebra.
\end{lemma}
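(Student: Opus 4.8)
The plan is to recognize $X$ and $Y$ as images under the algebra isomorphism $\varphi$ of \eqref{eq:poly-matrix}: one has $X = \varphi(x)$ and $Y = \varphi(v_1)$ (recall $v_1 = \beta$). The first of these is immediate from the definition of $\varphi$: the $i$-th row of $\varphi(x)$ lists the normal-basis coordinates of $x(v_i) = \sigma(v_i) = v_{i+1}$ (indices modulo $p-1$, with $v_p := v_1$), which is the basis vector $E_{i+1}$ for $i \le p-2$ and $E_1$ for $i = p-1$ — precisely $X$. Granting also $Y = \varphi(v_1)$, the statement reduces to the fact that for each fixed $i$ the powers $v_{i+1}, v_{i+1}^2, \dots, v_{i+1}^{p-1}$ form a $\mathbb{Q}$-basis of $\mathbb{Q}(\beta)$.

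The one step that needs care is the identification $Y = \varphi(v_1)$. Its $i$-th row records the normal-basis expansion of $v_1 v_i = \beta\cdot\beta^{r^{i-1}} = \beta^{1+r^{i-1}}$. If $i \ne k$ then $1 + r^{i-1}\not\equiv 0\pmod p$, so by the definition of $s_i$ this equals $\beta^{r^{s_i-1}} = v_{s_i}$ and the row is $E_{s_i}$; if $i = k$ then $r^{k-1}\equiv p-1\equiv -1\pmod p$, so $v_1 v_k = 1$, and the relation $1+\beta+\cdots+\beta^{p-1}=0$ gives $1 = -(v_1+\cdots+v_{p-1})$, so the row is $J$. This matches the definition of $Y$ row by row.

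To conclude: since $\varphi$ is a $\mathbb{Q}$-algebra isomorphism and $v_1\in\mathbb{Q}(\beta)$ is a constant, $\varphi^{-1}(X^iY^j) = x^i\ast_\sigma v_1^j = \sigma^i(v_1^j)\,x^i = v_{i+1}^j\,x^i$. As a $\mathbb{Q}$-vector space $\mathbb{Q}(\beta)[x;\sigma]/(x^{p-1}-1) = \bigoplus_{i=0}^{p-2}\mathbb{Q}(\beta)\,x^i$, and for each $i$ the element $v_{i+1}=\beta^{r^i}$ is a primitive $p$-th root of unity, so $\{v_{i+1}^j : 1\le j\le p-1\} = \{\beta,\beta^2,\dots,\beta^{p-1}\}$, which is a $\mathbb{Q}$-basis of $\mathbb{Q}(\beta)$ (it spans the same space as $1,\beta,\dots,\beta^{p-2}$ and has the right cardinality; equivalently it is the normal basis of Lemma \ref{lem:normal basis} up to reordering). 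Hence $\{v_{i+1}^jx^i : 0\le i\le p-2,\ 1\le j\le p-1\}$ is a $\mathbb{Q}$-basis of the quotient ring, whose dimension is $(p-1)^2$, and applying $\varphi$ transports it to the asserted basis $\{X^iY^j\}$ of $\mathbb{Q}^{(p-1)\times(p-1)}$. The ``in particular'' clause is then immediate, since every matrix is a $\mathbb{Q}$-linear combination of the $X^iY^j$ and therefore lies in the subalgebra generated by $X$ and $Y$.

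The main obstacle is the identification $Y = \varphi(v_1)$ — in particular the distinguished row $k$, where one invokes the cyclotomic relation $1 = -(v_1+\cdots+v_{p-1})$ coming from $1+\beta+\cdots+\beta^{p-1}=0$; once $X$ and $Y$ are recognized as $\varphi(x)$ and $\varphi(v_1)$, everything else is routine bookkeeping with the direct sum decomposition and the observation that multiplying the exponent of a primitive root of unity by a unit mod $p$ permutes the conjugates.
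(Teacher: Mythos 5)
Your proposal is correct and takes essentially the same route as the paper: identify $X = \varphi(x)$ and $Y = \varphi(\beta)$ and transport a basis across the algebra isomorphism $\varphi$. The paper compresses this to one sentence (``It is straightforward to verify that $\varphi(x)=X$ and $\varphi(\beta)=Y$ and our claim follows easily''); you supply the row-by-row verification of both identifications — including the distinguished row $k$ of $Y$, handled via $1+\beta+\cdots+\beta^{p-1}=0$ — and the computation $\varphi^{-1}(X^iY^j)=\sigma^i(\beta^j)x^i=v_{i+1}^j\,x^i$, closing the argument with the observation that $\{v_{i+1}^j:1\le j\le p-1\}$ is a $\mathbb{Q}$-basis of $\mathbb{Q}(\beta)$ for each fixed $i$. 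This is a correct and complete filling-in of the paper's terse proof, not a different approach.
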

\begin{proof}
Let $\beta$ be a $p$-th primitive root of unity. According to \eqref{eq:poly-matrix}, there exists an isomorphism of algebras:
\[
\varphi: \mathbb{Q}(\beta)[x,\sigma]/(x^{p-1} - 1) \xrightarrow{\sim} \mathbb{Q}^{(p-1)\times (p-1)}.
\]
It is straightforward to verify that $\varphi(x) = X$ and $\varphi(\beta) = Y$ and our claim follows easily.
\end{proof}

\begin{corollary}
$\sum_{j=1}^{p-1}Y^j=-I$.
\end{corollary}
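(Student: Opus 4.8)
The plan is to exploit the algebra isomorphism $\varphi$ from Lemma~\ref{lm-2}, under which $\varphi(\beta) = Y$, and to transport the identity to a statement about $\beta \in \mathbb{Q}(\beta)$. Since $\varphi$ is a ring isomorphism, $\sum_{j=1}^{p-1} Y^j = -I$ is equivalent to $\sum_{j=1}^{p-1} \beta^j = -1$ in $\mathbb{Q}(\beta)$ (here $\beta^j$ is understood via the image of $\beta^j x^0$ in $\mathbb{Q}(\beta)[x;\sigma]/(x^{p-1}-1)$, and $-I = \varphi(-1)$ where $-1$ is the constant skew polynomial). Thus the whole corollary reduces to the defining relation of the $p$-th primitive root of unity.

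The key steps, in order, are: first, recall from Subsection~\ref{subsec:cyclotomic} that $\beta^{p-1} + \cdots + \beta + 1 = 0$, equivalently $\sum_{j=1}^{p-1} \beta^j = -1$. Second, observe that the map $c \mapsto c x^0$ embeds $\mathbb{Q}(\beta)$ as the degree-zero part of $\mathbb{Q}(\beta)[x;\sigma]/(x^{p-1}-1)$, and that on this degree-zero part the twisted product $\ast_\sigma$ restricts to ordinary multiplication in $\mathbb{Q}(\beta)$ (because $x$ does not appear, so $\sigma$ is never triggered); hence $\beta^j$ computed in the skew-polynomial ring agrees with $\beta^j$ in $\mathbb{Q}(\beta)$. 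Third, apply $\varphi$: since $\varphi$ is additive and multiplicative, $\varphi(\beta^j) = \varphi(\beta)^j = Y^j$, and $\varphi(\sum_{j=1}^{p-1}\beta^j) = \sum_{j=1}^{p-1} Y^j$. Fourth, since $\varphi(-1) = -\varphi(1) = -I$, combining with step one yields $\sum_{j=1}^{p-1} Y^j = -I$.

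I do not expect a genuine obstacle here; this is essentially a one-line consequence of Lemma~\ref{lm-2}. The only point requiring a modicum of care is the bookkeeping in the second step: one must make sure that raising $Y$ to powers in the matrix algebra corresponds to raising $\beta$ to powers in $\mathbb{Q}(\beta)$ and not to some skewed expression — but this is immediate because $\beta$ lies in the coefficient field, which sits inside the skew polynomial ring as honest scalars, and $\varphi$ is by construction a $\mathbb{Q}(\beta)$-unrelated but ring-respecting map. One could alternatively give a purely matrix-theoretic proof by noting that $Y$ represents multiplication by $\beta$ on $\mathbb{Q}(\beta)$ with respect to the normal basis $\{v_j\}$, so $Y$ satisfies the minimal polynomial $z^{p-1} + \cdots + z + 1 = 0$ of $\beta$, giving $Y^{p-1} + \cdots + Y + I = 0$ directly; either route is short, and I would present the isomorphism argument as the cleanest.
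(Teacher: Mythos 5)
Your proof is correct and follows essentially the same route as the paper: both apply the algebra isomorphism $\varphi$ (with $\varphi(\beta)=Y$ from Lemma~\ref{lm-2}) to the cyclotomic relation $1+\beta+\cdots+\beta^{p-1}=0$. The extra bookkeeping you flag about scalars versus skew products is harmless but unnecessary, since $\varphi$ is a $\mathbb{Q}$-algebra homomorphism and the claim is a one-line push-forward.
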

\begin{proof}
Since $\beta^{p-1} +\cdots+\beta+1=0$, we have $Y^{p-1} +\cdots+Y+I = \varphi(\beta^{p-1} +\cdots+\beta+1)=0$.
\end{proof}

\begin{proposition}
The circulant matrices are in a one-to-one correspondence with elements in $\mathbb{Q}(\beta)[x,\sigma]/(x^{p-1} - 1)$ with coefficients in $\mathbb{Q}$.
\end{proposition}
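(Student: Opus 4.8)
The plan is to show that the isomorphism $\varphi\colon \mathbb{Q}(\beta)[x;\sigma]/(x^{p-1}-1)\to\mathbb{Q}^{(p-1)\times(p-1)}$ of \eqref{eq:poly-matrix} restricts to a bijection between the circulant matrices and the set of skew polynomials all of whose coefficients lie in $\mathbb{Q}$. First I would observe that, since $\sigma$ fixes $\mathbb{Q}$ pointwise, we have $x\ast_\sigma c=\sigma(c)x=cx$ for every $c\in\mathbb{Q}$; hence on polynomials with coefficients in $\mathbb{Q}$ the skew product $\ast_\sigma$ coincides with ordinary polynomial multiplication, and the set of such elements (written, by the paper's convention, as polynomials of degree at most $p-2$) is precisely the commutative subring $\mathbb{Q}[x]/(x^{p-1}-1)$ with $\mathbb{Q}$-basis $\{1,x,\dots,x^{p-2}\}$.

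Next I would invoke the computation in the proof of Lemma~\ref{lm-2} that $\varphi(x)=X$, the cyclic shift matrix. Then $\varphi$ sends $\sum_{j=0}^{p-2}\mu_{j+1}x^j$ with all $\mu_{j+1}\in\mathbb{Q}$ to $\sum_{j=0}^{p-2}\mu_{j+1}X^j$, a $\mathbb{Q}$-linear combination of the powers $X^0=I_{p-1},X,\dots,X^{p-2}$, each of which is circulant. Hence $\varphi$ maps $\mathbb{Q}[x]/(x^{p-1}-1)$ into the space $\mathcal{C}$ of circulant $(p-1)\times(p-1)$ matrices over $\mathbb{Q}$. For the reverse containment I would argue by dimension: $\mathcal{C}$ is $(p-1)$-dimensional over $\mathbb{Q}$ because a circulant matrix is determined by its first row, while the first rows of $I_{p-1},X,X^2,\dots,X^{p-2}$ are exactly the standard basis vectors $e_1,e_2,\dots,e_{p-1}$, so these $p-1$ circulant matrices are linearly independent and therefore a $\mathbb{Q}$-basis of $\mathcal{C}$. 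Consequently every $C\in\mathcal{C}$ can be written uniquely as $C=\sum_{j=0}^{p-2}\mu_{j+1}X^j$ with $\mu_{j+1}\in\mathbb{Q}$, whence $\varphi^{-1}(C)=\sum_{j=0}^{p-2}\mu_{j+1}x^j\in\mathbb{Q}[x]/(x^{p-1}-1)$.

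Combining the two inclusions, and using that $\varphi$ is already a bijection on the whole ring, shows that $\varphi$ restricts to a bijection (indeed a $\mathbb{Q}$-algebra isomorphism) $\mathbb{Q}[x]/(x^{p-1}-1)\xrightarrow{\sim}\mathcal{C}$, which is exactly the asserted one-to-one correspondence. I do not anticipate a genuine obstacle: the only point demanding a little care is the dimension count verifying that the powers of $X$ already exhaust all circulant matrices, together with matching the convention for the ``first row'' of a circulant matrix to the indexing of $X$ used in the paper.
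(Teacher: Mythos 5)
Your argument is correct and follows essentially the same route as the paper's own proof: both directions proceed via $\varphi(x^j)=X^j$, with the forward inclusion using that powers of the circulant $X$ are circulant and the reverse using that $\{X^j\}_{j=0}^{p-2}$ is a $\mathbb{Q}$-basis of the circulant matrices. Your version simply fills in a bit more detail (the dimension count and the remark that $\ast_\sigma$ reduces to ordinary multiplication on $\mathbb{Q}$-coefficients) that the paper leaves implicit.
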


\begin{proof}
Assume $f=\sum_{j=0}^{p-2}\mu_{j+1} x^j$ is a skew polynomial with coefficients in  $\mathbb{Q}$. Then
\[
\varphi(f)=\sum_{j=0}^{p-2}\mu_{j+1} \varphi(x^j) = \sum_{j=0}^{p-2} \mu_{j+1} X^j.
\]
Since $X$ is circulant, $X^j$ is also circulant for each $j=0,\dots, p-2$. This implies that $\varphi(f)$ is a circulant matrix.

Conversely, we observe that each circulant matrix $C$ can be uniquely written as $C= \sum_{j=0}^{p-2} c_{j+1} X^j$ with $c_i\in \mathbb{Q}, j=1,\dots, p-1$, since $\{X^j\}_{j=0}^{p-2}$ is a basis of the space of all $(p-1)\times (p-1)$ circulant matrices over $\mathbb{Q}$. This implies that $\varphi^{-1}(C) = \sum_{j=0}^{p-2} c_{j+1} x^j$ is a skew polynomial whose coefficients are in $\mathbb{Q}$.
\end{proof}

For each $0 \le i \le p -2$, we define a $(p-1)$-dimensional subspace of matrices:
\[
\mathcal{L}_i \coloneqq \spn \left\lbrace X^i Y^j:0 \le j \le p-2 \right\rbrace.
\]
\begin{defn}
Given a matrix $A\in \mathbb{Q}^{(p-1)\times (p-1)}$, the \emph{skew-sparsity} of $A$ is the smallest positive integer $s$ such that $A \in \bigoplus_{i\in I} \mathcal{L}_i$ for some $I \subseteq \{0,\dots, p-2\}$ with $| I | = s$. Equivalently, the skew-sparsity of $A$ can also be defined as the sparsity of $\varphi^{-1}(A)$, where $\varphi$ is the map defined in \eqref{eq:poly-matrix}.
\end{defn}
As a direct consequence of Proposition~\ref{prop:matrix multiplication}, we have the following:
\begin{theorem}\label{the-3}
Let $p$ be a prime number and let $A,B$ be $(p-1) \times (p-1)$ matrices over $\mathbb{Q}$. If $A \in \bigoplus_{i\in I} \mathcal{L}_i$ and $B \in \bigoplus_{k\in K} \mathcal{L}_k$ for some subsets $I,K\subseteq \{0,\dots, p - 2\}$, then the product $AB$ can be computed by an algorithm with complexity $\O^\sim(T^{\omega-2} p^2)$ where $T$ is the cardinality of $I + K \pmod{p-1}$. In particular, if there exists some $\epsilon > 0$ such that
\[
|I||K| \le O(p^{1-\epsilon/(\omega-2)}),
\]
then the product $AB$ can be computed by an algorithm with complexity $\O(p^{\omega - \epsilon})$.
\end{theorem}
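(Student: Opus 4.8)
The theorem is, as the surrounding text announces, a direct consequence of Proposition~\ref{prop:matrix multiplication}, and the plan is simply to spell out the reduction. The only real content is to translate the hypotheses $A\in\bigoplus_{i\in I}\mathcal{L}_i$ and $B\in\bigoplus_{k\in K}\mathcal{L}_k$ into support conditions on the skew polynomials $\varphi^{-1}(A)$ and $\varphi^{-1}(B)$; once that is done, the complexity bound of Proposition~\ref{prop:matrix multiplication} together with the monotonicity of $s\mapsto s^{\omega-2}$ finishes everything. The first step is therefore to verify that for each $0\le i\le p-2$,
\[
\varphi^{-1}(\mathcal{L}_i)=\{\,c\,x^i : c\in\mathbb{Q}(\beta)\,\},
\]
i.e. $\mathcal{L}_i$ is exactly the image under $\varphi$ of the skew polynomials supported on $\{i\}$. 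This follows from $\varphi(x)=X$ and $\varphi(\beta)=Y$ (established in the proof of Lemma~\ref{lm-2}): since $\varphi$ is an algebra isomorphism, $\varphi^{-1}(X^iY^j)=x^i\ast_\sigma\beta^j=\bigl(\sigma^i(\beta)\bigr)^j x^i$ for $0\le j\le p-2$, and because $\sigma^i(\beta)$ is again a primitive $p$-th root of unity, $\{\,(\sigma^i(\beta))^j : 0\le j\le p-2\,\}$ is a $\mathbb{Q}$-basis of $\mathbb{Q}(\beta)$; hence $\spn_{\mathbb{Q}}\{X^iY^j:0\le j\le p-2\}$ is carried by $\varphi^{-1}$ onto $\mathbb{Q}(\beta)\,x^i$. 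Since $\{\mathcal{L}_i\}_{i=0}^{p-2}$ is a direct-sum decomposition of $\mathbb{Q}^{(p-1)\times(p-1)}$ by Lemma~\ref{lm-2}, the hypotheses then force $\mathrm{supp}(\varphi^{-1}(A))\subseteq I$ and $\mathrm{supp}(\varphi^{-1}(B))\subseteq K$.

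The second step is a sumset comparison: directly from the definition \eqref{eq:sumset} and the two support inclusions, $\mathbb{S}(\varphi^{-1}(A),\varphi^{-1}(B))\subseteq (I+K)\bmod(p-1)$, so writing $t\coloneqq\#\mathbb{S}(\varphi^{-1}(A),\varphi^{-1}(B))$ and $T\coloneqq\#\bigl((I+K)\bmod(p-1)\bigr)$ we get $t\le T\le p-1$. By Proposition~\ref{prop:matrix multiplication}, Algorithm~\ref{alg-3} computes $AB$ in $\O^\sim(t^{\omega-2}p^2)$ operations over $\mathbb{Q}$; since $\omega\ge 2$, the map $s\mapsto s^{\omega-2}$ is nondecreasing on $[0,\infty)$, hence $t^{\omega-2}\le T^{\omega-2}$ and the first assertion $\O^\sim(T^{\omega-2}p^2)$ follows at once.

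For the final clause one may assume $\omega>2$ (if $\omega=2$ the bound above is already $\O^\sim(p^2)$). The elementary estimate $T\le\#(I+K)\le\#I\cdot\#K=|I||K|\le O(p^{1-\epsilon/(\omega-2)})$, combined again with monotonicity of $s\mapsto s^{\omega-2}$, yields $T^{\omega-2}p^2\le O\bigl(p^{(\omega-2)-\epsilon}\bigr)\cdot p^2=O(p^{\omega-\epsilon})$; the polylogarithmic factors hidden by $\O^\sim$ are absorbed into the exponent (shrinking $\epsilon$ by an arbitrarily small positive amount if a clean power of $p$ is desired). I do not expect a genuine obstacle here — the statement really is a corollary — so the only points that need care are the identification $\varphi^{-1}(\mathcal{L}_i)=\mathbb{Q}(\beta)\,x^i$ (hence the support inclusions) and checking that the $\O^\sim$ bookkeeping does not erode the stated exponent.
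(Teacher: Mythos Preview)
Your proposal is correct and is precisely the unpacking of the paper's one-line claim that the theorem is ``a direct consequence of Proposition~\ref{prop:matrix multiplication}''; the paper gives no further argument. Your identification $\varphi^{-1}(\mathcal{L}_i)=\mathbb{Q}(\beta)\,x^i$ is exactly what the paper is implicitly using when it says the skew-sparsity of $A$ equals $\#\varphi^{-1}(A)$, and the remaining sumset and exponent bookkeeping is routine and handled correctly.
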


\subsection{Structure of $\mathcal{L}_i$}
This subsection is devoted to the discussion of the structure of $\mathcal{L}_i$. By definition, we have
\[
\mathcal{L}_i = X^i \mathcal{L}_0,\quad i=0,\dots, p-2.
\]
Observing that for any $A = \begin{bmatrix}
a_1,a_2, \dots, a_{p-1}
\end{bmatrix}^\tp \in \mathbb{Q}^{(p-1)\times (p-1)}$, we have
\[
X A  = \begin{bmatrix}
a_2, \dots, a_{p-1}, a_1
\end{bmatrix}^\tp.
\]
This implies that the effect of left multiplication by $X$ is the cyclic shift up of rows. Thus it suffices to consider the structure of $\mathcal{L}_0$, which consists of $(p-1)\times (p-1)$ matrices over $\mathbb{Q}$ corresponding to elements in $\mathbb{Q}(\beta) \subseteq \mathbb{Q}(\beta)[x,\sigma]$, via the isomorphism $\varphi$ defined in \eqref{eq:poly-matrix}.

According to \eqref{eq-2}, the matrix $Y$ satisfies the relation
\[
\begin{bmatrix}
v_1 \\
\vdots \\
v_{p-1}\\
\end{bmatrix}
\beta=Y
\begin{bmatrix}
v_1 \\
\vdots \\
v_{p-1}\\
\end{bmatrix}.
\]
By induction, we may further derive
\begin{equation}\label{eq:relation for Y}
\begin{bmatrix}
v_1 \\
\vdots \\
v_{p-1}\\
\end{bmatrix}
\beta^i=Y^i
\begin{bmatrix}
v_1 \\
\vdots \\
v_{p-1}\\
\end{bmatrix},\quad i =0,\dots, p-2.
\end{equation}

With \eqref{eq:relation for Y}, we are able to characterize $\mathcal{L}_0$. To that end, we notice that an element $a\in \mathbb{Q}(\beta)$ can be written as
\[
a=c_1\beta+c_2\beta^2+\cdots+c_{p-1}\beta^{p-1}
\]
for some $c_1,\dots, c_{p-1}\in \mathbb{Q}$. We introduce two permutations $s,q\in \mathfrak{S}_{p-1}$ of $\{1,\dots,p-1\}$ defined by
\begin{align*}
r^{s(i)-1}+i & \equiv 0 \pmod{p},\quad i=1,\dots,p-1, \\
r^{q(i)-1}-i & \equiv 0 \pmod{p},\quad i=1,\dots,p-1.
\end{align*}

\begin{lemma}\label{lm-7}
For any $1 \le i \le p-1$, the $s(i)$-th row of $Y^j$ is
\begin{equation*}
(Y^j)_{s(i)}=\left\{
\begin{aligned}
    & E_{q(j-i)}, \quad if\quad  j-i>0\\
    & E_{q(p+j-i)}, \quad if\quad  j-i<0\\
    &J,\quad\quad\quad if\quad  j-i= 0
\end{aligned}
\right.
\end{equation*}
where $J=(-1,\dots,-1)$ and $E_k$ is the row vector whose entries are all zero except the $k$-th, which is $1$.
\end{lemma}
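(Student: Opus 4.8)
The engine of the argument is the relation \eqref{eq:relation for Y}, which says $Y^j\,[v_1,\dots,v_{p-1}]^{\tp}=\beta^j\,[v_1,\dots,v_{p-1}]^{\tp}$. The plan is to read the rows of $Y^j$ off this identity. First I would note that, since $\{v_1,\dots,v_{p-1}\}$ is a $\mathbb{Q}$-basis of $\mathbb{Q}(\beta)$ by Lemma~\ref{lem:normal basis}, a rational row vector $w$ is completely determined by the field element $w\,[v_1,\dots,v_{p-1}]^{\tp}\in\mathbb{Q}(\beta)$. Hence the $m$-th row of $Y^j$ is the unique rational row vector whose pairing with $[v_1,\dots,v_{p-1}]^{\tp}$ equals the $m$-th entry of $\beta^j[v_1,\dots,v_{p-1}]^{\tp}$, namely $\beta^j v_m=\beta^{\,r^{m-1}+j}$, using $v_m=\beta^{r^{m-1}}$.

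Next I would specialize $m=s(i)$. The defining congruence of the permutation $s$ gives $r^{s(i)-1}\equiv -i\pmod p$, so $\beta^{\,r^{s(i)-1}+j}=\beta^{\,j-i}$ since $\beta$ is a primitive $p$-th root of unity. It then remains to rewrite $\beta^{\,j-i}$ in the basis $\{v_{\ell}\}$, which I would split into the three cases governed by the sign of $j-i$: if $j-i>0$ then $\beta^{\,j-i}=v_{q(j-i)}$ directly from the definition of $q$; if $j-i<0$ then I would first replace $\beta^{\,j-i}$ by $\beta^{\,p+j-i}$ using $\beta^p=1$ and then apply the definition of $q$; and if $j-i=0$ then $\beta^{\,j-i}=1=-(v_1+\dots+v_{p-1})$, since $v_1+\dots+v_{p-1}=\beta+\dots+\beta^{p-1}=-1$, and this is precisely the row $J$. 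In each case, matching coordinates against $\{v_{\ell}\}$ produces the claimed row of $Y^j$, by the uniqueness noted above.

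The delicate point, and the step I would be most careful about, is the index bookkeeping: one has to check that the arguments $j-i$ and $p+j-i$ fed to $q$ really lie in $\{1,\dots,p-1\}$ over the relevant range of $j$ (this is where $1\le i\le p-1$ enters), and that the congruence $r^{s(i)-1}+i\equiv 0\pmod p$ indeed yields $\beta^{\,j-i}$ rather than $\beta^{\,i-j}$. Beyond this there is no genuine obstacle: the two inputs — relation \eqref{eq:relation for Y} and the identity $\sum_{\ell=1}^{p-1}v_{\ell}=-1$ (which holds because $\{r^{0},\dots,r^{p-2}\}\equiv\{1,\dots,p-1\}\pmod p$ and $\beta+\dots+\beta^{p-1}=-1$) — are both already on hand, and the proof amounts to a clean translation of modular arithmetic into matrix rows.
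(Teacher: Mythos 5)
Your argument is correct and is essentially the paper's own: both identify the $s(i)$-th row of $Y^j$ with the coordinate vector of $v_{s(i)}\beta^j=\beta^{r^{s(i)-1}+j}=\beta^{j-i}$ in the normal basis $\{v_\ell\}$, and then split into the same three cases according to the sign of $j-i$, invoking $\beta^0=-(v_1+\cdots+v_{p-1})$ for the middle case. The only cosmetic difference is that you anchor the row-reading step in relation~\eqref{eq:relation for Y} together with the $\mathbb{Q}$-linear independence of the $v_\ell$, while the paper phrases it as ``by definition of $\varphi$''; the content is the same.
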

\begin{proof}
By definition of $\varphi$, elements in the $s(i)$-th row of $Y^j$ are $\mathbb{Q}$-coefficients in the expansion of $v_{s(i)} \beta^j \in \mathbb{Q}(\beta)$ with respect to the basis $\{v_1,\dots,v_{p-1} \}$. Since $v_{s(i)} \beta^j=\beta^{r^{s(i)-1}+j}$ and $r^{s(i)-1} \equiv -i \pmod{p}$, we have $v_{s(i)} \beta^j= \beta^{j-i}$. We conclude the argument by the following three cases:
\begin{itemize}
\item if $j-i> 0$, then $\beta^{j-i} = v_{q(j-i)}$, thus $(Y^j)_{s(i)}= E_{q(j-i)}$;
\item if $j-i< 0$, then $\beta^{j-i}=\beta^{p+j-i}=v_{q(p+j-i)}$ and hence $(Y^j)_{s(i)}= E_{q(p+j-i)}$;
\item if  $j-i=0$, then $\beta^0=-v_1-\cdots-v_{p-1}$ and $(Y^j)_{s(i)}=J$.
\end{itemize}
\end{proof}

\begin{corollary}
For any $1 \le i \le p -1$ and $c_1,\dots, c_{p-1} \in \mathbb{Q}$, the $s(i)$-th row of $\varphi(\sum_{j=1}^{p-1} c_j \beta^j)$ is
\[
\left( \sum_{j=1}^{i-1} c_j E_{q(p+j - i)} \right) + c_i J + \left( \sum_{j=i+1}^{p-1} c_j E_{q(j-i)} \right).
\]
\end{corollary}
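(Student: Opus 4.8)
The plan is to deduce this corollary directly from Lemma~\ref{lm-7} by linearity of $\varphi$. First I would write $a = \sum_{j=1}^{p-1} c_j \beta^j \in \mathbb{Q}(\beta) \subseteq \mathbb{Q}(\beta)[x;\sigma]$ and observe that since $\varphi$ is $\mathbb{Q}$-linear (being an algebra isomorphism), we have $\varphi(a) = \sum_{j=1}^{p-1} c_j \varphi(\beta^j) = \sum_{j=1}^{p-1} c_j Y^j$, where the last equality uses $\varphi(\beta) = Y$ (established in the proof of Lemma~\ref{lm-2}) and hence $\varphi(\beta^j) = Y^j$. Therefore the $s(i)$-th row of $\varphi(a)$ is $\sum_{j=1}^{p-1} c_j (Y^j)_{s(i)}$.

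Next I would substitute the case analysis from Lemma~\ref{lm-7} into this sum. Splitting the index set $\{1,\dots,p-1\}$ according to the sign of $j - i$: for $j < i$ we have $j - i < 0$, so $(Y^j)_{s(i)} = E_{q(p+j-i)}$; for $j = i$ we have $(Y^j)_{s(i)} = J$; and for $j > i$ we have $j - i > 0$, so $(Y^j)_{s(i)} = E_{q(j-i)}$. Collecting the three groups of terms gives exactly
\[
\left( \sum_{j=1}^{i-1} c_j E_{q(p+j-i)} \right) + c_i J + \left( \sum_{j=i+1}^{p-1} c_j E_{q(j-i)} \right),
\]
which is the claimed expression. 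One should note the edge cases $i = 1$ (the first sum is empty) and $i = p-1$ (the last sum is empty), but these are handled automatically by the convention that an empty sum is zero.

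This is essentially a bookkeeping argument, so I do not anticipate a serious obstacle; the only point requiring a little care is confirming that Lemma~\ref{lm-7} applies for \emph{every} $j$ in the range $1,\dots,p-1$ (not merely $0,\dots,p-2$), so that the decomposition of the full sum $\sum_{j=1}^{p-1} c_j(Y^j)_{s(i)}$ into the three sign-classes is exhaustive and the index ranges $1 \le j \le i-1$, $j=i$, $i+1 \le j \le p-1$ partition $\{1,\dots,p-1\}$. Since Lemma~\ref{lm-7} is stated for all $1 \le j \le p-1$, this is immediate, and the corollary follows.
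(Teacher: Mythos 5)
Your proof is correct and takes essentially the same route as the paper: apply $\mathbb{Q}$-linearity of $\varphi$ to write the $s(i)$-th row as $\sum_{j=1}^{p-1} c_j (Y^j)_{s(i)}$, then substitute Lemma~\ref{lm-7} termwise according to the sign of $j-i$. The paper's own proof is a one-line version of exactly this argument.
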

\begin{proof}
By the linearity of $\varphi$, we have $\varphi(\sum_{j=1}^{p-1} c_j \beta^j) = \sum_{j=1}^{p-1} c_j Y^{j}$ and the conclusion follows directly from Lemma \ref{lm-7}.
\end{proof}

Next we consider
\begin{equation}\label{eq-5}
A=
\begin{bmatrix}
E_{s(1)} \\
E_{s(2)} \\
\vdots \\
E_{s(n)}
\end{bmatrix},\quad
B=
\begin{bmatrix}
E_{q(1)} \\
E_{q(2)} \\
\vdots \\
E_{q(n)}
\end{bmatrix}^\tp.
\end{equation}
Since $r^{s(i)-1}+i \equiv 0 \pmod{p}$ and $r^{q(i)-1} \equiv i \pmod{p}$, we have
\[
r^{s(i)}+r^{q(i)}=0 \pmod{p},
\]
We notice that $r$ is a primitive root of $\mathbb{Z}_p$, thus $r^{\frac{p-1}{2}} \equiv -1 \pmod{p}$ and
\begin{equation}\label{eq-7}
s(i) \equiv q(i)+\frac{p-1}{2} \pmod{p-1}.
\end{equation}
According to \eqref{eq-7}, we obtain the next lemma describing a relation between matrices $A$ and $B$.
\begin{lemma}\label{lem:AB}
Let $A, B$ be as defined in \eqref{eq-5}. We have
 $AB= \begin{bmatrix}
0 & 0 & \cdots & 1\\
0 & 0 & \cdots & 0\\
\vdots & \vdots & \ddots & \vdots\\
1 & 0 & \cdots & 0\\
\end{bmatrix}$.
\end{lemma}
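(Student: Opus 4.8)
The plan is to compute the $(u,v)$-entry of $AB$ directly from the definitions of $A$ and $B$ in \eqref{eq-5}. By construction, the $u$-th row of $A$ is $E_{s(u)}$ and the $v$-th column of $B$ is $E_{q(v)}^\tp$, so the $(u,v)$-entry of $AB$ is $\langle E_{s(u)}, E_{q(v)}\rangle = \delta_{s(u),q(v)}$. Hence $AB$ has a single $1$ in each row and each column (both $s$ and $q$ are permutations of $\{1,\dots,p-1\}$), i.e.\ $AB$ is itself a permutation matrix, and the entry is $1$ precisely when $s(u) = q(v)$. It remains to identify this permutation explicitly.

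Next I would use \eqref{eq-7}, which says $s(i) \equiv q(i) + \tfrac{p-1}{2} \pmod{p-1}$, equivalently $q^{-1}s(i) \equiv i + \tfrac{p-1}{2} \pmod{p-1}$ after composing with $q^{-1}$ (here one must be a little careful: the congruence in \eqref{eq-7} is modulo $p-1$, and since $q$ is a permutation of $\{1,\dots,p-1\}$, it is determined by the residue of its argument modulo $p-1$; I would state this normalization convention explicitly). Therefore $s(u) = q(v)$ holds iff $v \equiv u + \tfrac{p-1}{2} \pmod{p-1}$. In other words, $AB$ is the permutation matrix sending index $u$ to index $u + \tfrac{p-1}{2} \bmod (p-1)$, which is exactly the block anti-diagonal matrix $\begin{bmatrix} 0 & I \\ I & 0\end{bmatrix}$ written with $(p-1)/2 \times (p-1)/2$ blocks. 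Writing this out entrywise gives precisely the matrix claimed in the statement.

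The only genuine subtlety — and the step I expect to need the most care — is the index bookkeeping in passing from \eqref{eq-7} to the explicit permutation: the definitions of $s$ and $q$ use representatives in $\{1,\dots,p-1\}$, whereas \eqref{eq-7} is a congruence modulo $p-1$, so I would first check that both $s$ and $q$ descend to well-defined bijections of $\mathbb{Z}/(p-1)\mathbb{Z}$ under the identification of $\{1,\dots,p-1\}$ with $\mathbb{Z}/(p-1)\mathbb{Z}$ (this follows because $r$ has order $p-1$ modulo $p$, so $r^{a-1}$ depends only on $a \bmod (p-1)$). Once that is settled, everything reduces to the observation that composing a permutation of $\mathbb{Z}/(p-1)\mathbb{Z}$ with the shift by $(p-1)/2$ swaps the two halves of the index set, and the matrix form follows immediately. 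Alternatively, and perhaps more cleanly, one could avoid coordinates entirely: note $A = \varphi(\text{something})$ and $B$ is a column-permutation, then use $\varphi(x^{(p-1)/2})$ together with \eqref{eq:relation for Y}; but the direct entrywise argument above is the most transparent and is the one I would write up.
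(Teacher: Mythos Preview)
Your computation of the $(u,v)$-entry as $\delta_{s(u),q(v)}$ is correct, and so is the reduction to determining when $s(u)=q(v)$. The error is in the next step. From \eqref{eq-7} you have $s(u)\equiv q(u)+\tfrac{p-1}{2}\pmod{p-1}$, and you then claim that applying $q^{-1}$ gives $q^{-1}s(u)\equiv u+\tfrac{p-1}{2}$. This is not valid: $q$ is (up to a shift by $1$) the discrete logarithm base $r$ modulo $p$, hence a bijection of $\mathbb{Z}/(p-1)\mathbb{Z}$ but \emph{not} an additive map, so $q^{-1}\bigl(q(u)+\tfrac{p-1}{2}\bigr)\neq q^{-1}(q(u))+\tfrac{p-1}{2}$ in general. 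Consequently your conclusion $v\equiv u+\tfrac{p-1}{2}\pmod{p-1}$ is false, and the matrix you describe, the block form $\begin{bmatrix}0&I\\ I&0\end{bmatrix}$, is not the one in the statement. The matrix asserted in the lemma is the exchange (anti-diagonal) matrix with $c_{ij}=1$ iff $i+j=p$; already for $p=5$, $r=2$ one checks directly that $s(1)=q(4)$, $s(2)=q(3)$, etc., which is the anti-diagonal pattern and not the shift by $2$.

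The fix is exactly what the paper does: instead of trying to invert $q$ additively, pass through exponentiation. From $q(u)+\tfrac{p-1}{2}\equiv q(v)\pmod{p-1}$ raise $r$ to both sides modulo $p$; using $r^{q(k)-1}\equiv k$ and $r^{(p-1)/2}\equiv -1\pmod{p}$ this becomes $-u\equiv v\pmod{p}$, i.e.\ $u+v=p$. That yields the anti-diagonal matrix directly.
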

\begin{proof}
By definition, we have
$A B=(c_{ij})_{(p-1)\times (p-1)}$ where $c_{ij} = \delta_{s(i)q(j)},1\le i,j \le p-1$ and $\delta_{kl}$ is the Kronecker delta.
Equation \eqref{eq-7} implies
\[
c_{ij} = \delta_{q(i)+\frac{p-1}{2},q(j)},
\]
where $q(i)+\frac{p-1}{2}$ is regarded as an element in $\mathbb{Z}_{p-1}$. We observe that $q(i)+\frac{p-1}{2} \equiv q(j) \pmod{p-1}$ if and only if $r^{q(i)+\frac{p-1}{2}} \equiv r^{q(j)}\pmod{p}$. Since $r^{\frac{p-1}{2}} \equiv -1\pmod{p}$ and $r^{q(k)} \equiv k\pmod{p}$ for any $1\le k \le p-1$, we may further conclude that $q(i)+\frac{p-1}{2} \equiv q(j) \pmod{p-1}$ if and only if $-i \equiv j \pmod{p}$. Thus we obtain
\[
c_{ij} = \begin{cases}
1,\quad \text{if $p$ divides $(i+j)$}, \\
0,\quad \text{otherwise}.
\end{cases}
\]
\end{proof}

Given $c_1,\dots, c_{p-1} \in \mathbb{Q}$, we define
\begin{equation*}
P(c_1,\dots, c_{p-1}) \coloneqq \begin{bmatrix}
0 & c_{p-1}  & \cdots & c_3 & c_2  \\
c_1 & 0 &     \cdots  & c_4 & c_3    \\
\vdots & \vdots  &\ddots   & \vdots &\vdots \\
c_{p-3}& c_{p-4}  &     \cdots  & 0 & c_{p-1} \\
c_{p-2} & c_{p-3} &     \cdots  &c_1 &0  \\
\end{bmatrix},\quad
Q(c_1,\dots, c_{p-1})\coloneqq \begin{bmatrix}
c_1 & c_1 & \cdots & c_1\\
c_2 & c_2 & \cdots & c_2\\
\vdots & \vdots & \ddots & \vdots\\
c_{p-2} & c_{p-2} & \cdots & c_{p-2}\\
c_{p-1} & c_{p-1} & \cdots & c_{p-1}\\
\end{bmatrix}.
\end{equation*}
For instance, if $p = 7$ then we have
\begin{equation*}
P(c_1,\dots, c_6)= \begin{bmatrix}
0 & c_6 & c_5 & c_4 & c_3 & c_2\\
c_1 & 0 & c_6 & c_5 & c_4 &  c_3 \\
c_2 & c_1 & 0 & c_6 &  c_5 & c_4\\
c_3 & c_2 & c_1 &  0 & c_6 & c_5\\
c_4 & c_3 &  c_2 & c_1 & 0 & c_6\\
c_5 & c_4 & c_3 & c_2 &  c_1 & 0\\
\end{bmatrix},\quad
Q(c_1,\dots, c_6) = \begin{bmatrix}
c_1 & c_1 & c_1 & c_1 & c_1 & c_1\\
c_2 & c_2 & c_2 & c_2 & c_2 & c_2 \\
c_3 & c_3 & c_3 & c_3 & c_3 & c_3 \\
c_4 & c_4 & c_4 & c_4 & c_4 & c_4 \\
c_5 & c_5 & c_5 & c_5 & c_5 & c_5 \\
c_6 & c_6  & c_6 & c_6 & c_6  & c_6
\end{bmatrix}.
\end{equation*}

\begin{theorem}\label{thm:L_i}
The subspace $\mathcal{L}_0 \subseteq \mathbb{Q}^{(p-1) \times (p-1)}$ consists of matrices of the form
\[
A^{-1} \left( P(c_1,\dots,c_{p-1})-Q(c_1,\dots,c_{p-1}) \right) A,
\]
where $c_1,\dots, c_{p-1} \in \mathbb{Q}$. Moreover, for each $0 \le i \le p -2$, the subspace $\mathcal{L}_i$ is obtained by shifting up rows of elements in $\mathcal{L}_0$ by $i$. Here for a given $C = (c_{kl})_{k,l=1}^{p-1} \in \mathbb{Q}^{(p-1) \times (p-1)}$, the matrix obtained by shifting up rows of $C$ by $i$ is simply $(c_{k-i,j})_{k,l=1}^{p-1}$ and $(k-i)$ should be understood as $(k-i) \pmod{p-1}$.
\end{theorem}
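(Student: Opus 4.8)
The plan is to compute $\varphi\bigl(\sum_{j=1}^{p-1} c_j \beta^j\bigr)$ explicitly and recognize it in the claimed form. Since $\mathcal{L}_0$ consists precisely of matrices $\varphi(a)$ for $a = \sum_{j=1}^{p-1} c_j \beta^j \in \mathbb{Q}(\beta)$ (using that $\{\beta,\dots,\beta^{p-1}\}$ spans $\mathbb{Q}(\beta)$ over $\mathbb{Q}$ together with the relation $1+\beta+\cdots+\beta^{p-1}=0$), and since $\mathcal{L}_i = X^i\mathcal{L}_0$ with left multiplication by $X$ being the cyclic shift-up of rows (already established just before the theorem), the second and third sentences of the statement are immediate once the first is proved. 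So the whole content is the description of $\mathcal{L}_0$.

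For the first part, I would start from the Corollary just proved, which says the $s(i)$-th row of $\varphi(\sum_j c_j\beta^j)$ equals $\sum_{j<i} c_j E_{q(p+j-i)} + c_i J + \sum_{j>i} c_j E_{q(j-i)}$. The key move is to conjugate by $A$ (and its inverse $B$, noting $AB$ is the anti-diagonal permutation matrix from Lemma~\ref{lem:AB}, so $A^{-1}$ is $B$ up to that fixed permutation): left multiplication by $A$ permutes the rows according to $s$, so that the $s(i)$-th row moves to row $i$, and right multiplication by $A^{-1}=B$ (composed with the anti-diagonal flip) turns each $E_{q(k)}$ into the standard basis row vector $E_k$ and turns $J$ into $J$. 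After this conjugation the $i$-th row becomes $\sum_{j<i} c_j E_{p+j-i} + c_i J + \sum_{j>i} c_j E_{j-i}$, where indices are read modulo $p-1$; comparing with the explicit displayed form of $P(c_1,\dots,c_{p-1})$ — whose $(i,j)$ entry is $c_{j-i}$ for $j>i$, $c_{p-1+j-i}$ for $j<i$, and $0$ on the diagonal — and subtracting the rank-one-per-row matrix $Q(c_1,\dots,c_{p-1})$ (which supplies the $-c_i$ in every column of row $i$, accounting for $c_iJ$ and also correcting the would-be diagonal entry to $0$) shows that $A\,\varphi(a)\,A^{-1} = P(c_1,\dots,c_{p-1}) - Q(c_1,\dots,c_{p-1})$, i.e. $\varphi(a) = A^{-1}(P-Q)A$.

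I expect the bookkeeping of the conjugation to be the main obstacle: one has to track carefully how $s$ and $q$ interact under the permutation matrices $A$ and $B$, use \eqref{eq-7} (equivalently Lemma~\ref{lem:AB}) to identify $A^{-1}$ with $B$ up to the anti-diagonal flip, and make sure the index arithmetic — which alternates between mod $p$ (inside exponents of $\beta$, via $r$) and mod $p-1$ (for the column index of the normal basis) — lines up so that the three cases $j<i$, $j=i$, $j>i$ in the Corollary map exactly onto the three regions (above diagonal, diagonal, below diagonal) of $P-Q$. A clean way to organize this is to verify the matrix identity $A\,\varphi(\beta^j)\,A^{-1} = P(E_j) - Q(E_j)$ for each single power $\beta^j$ (where $E_j$ denotes the tuple with $1$ in position $j$), which by Lemma~\ref{lm-7} reduces to checking that the $i$-th row of the left side is $E_{j-i}$ (indices mod $p-1$) off the diagonal and $-\sum_k E_k$ when $j=i$ — and then extend by $\mathbb{Q}$-linearity in the $c_j$'s. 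Finally, I would remark that every matrix of the form $A^{-1}(P-Q)A$ does lie in $\mathcal{L}_0$ since the $c_j$ range over all of $\mathbb{Q}^{p-1}$ and $\varphi$ is a bijection onto $\mathcal{L}_0$ when restricted to $\mathbb{Q}(\beta)$, giving the reverse inclusion and hence equality.
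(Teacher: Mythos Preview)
Your overall strategy is exactly the paper's: use the Corollary after Lemma~\ref{lm-7} to read off the rows of $\varphi(a)$, sandwich by the permutation matrices $A,B$ of \eqref{eq-5}, and invoke Lemma~\ref{lem:AB} to pass between $B$ and $A^{-1}$. The paper's proof is in fact the one-line identity $A\,\varphi(a)\,B=(P-Q)\,(AB)$, from which $\varphi(a)=A^{-1}(P-Q)A$ follows because $B(AB)^{-1}=A^{-1}$.

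That said, your intermediate formulas contain index slips that do \emph{not} cancel. First, right multiplication by $B$ (not by $A^{-1}$) is what sends $E_{q(k)}\mapsto E_k$, so the row $\sum_{j<i}c_jE_{p+j-i}+c_iJ+\sum_{j>i}c_jE_{j-i}$ you display is row $i$ of $A\varphi(a)B$, not of $A\varphi(a)A^{-1}$; the latter has $E_{i-j}$ and $E_{p+i-j}$ in the two sums. Second, your description of $P$ is transposed and uses the wrong modulus: from the displayed definition (e.g.\ $(P)_{1,2}=c_{p-1}$, $(P)_{2,1}=c_1$) one has $(P)_{i,j}=c_{i-j}$ for $j<i$ and $(P)_{i,j}=c_{p+i-j}$ for $j>i$ (equivalently $(P)_{i,j}=c_{(i-j)\bmod p}$ with $c_0\coloneqq 0$), not $c_{j-i}$ and $c_{p-1+j-i}$. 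With these two corrections the row of $A\varphi(a)A^{-1}$ matches $(P-Q)$ entrywise and the argument closes; your own suggestion to verify the identity for each $\beta^j$ separately is a good way to catch exactly these sign and modulus issues.
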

\begin{proof}
Let $A,B$ be matrices defined in \eqref{eq-5}. Then for any $c_1,\dots, c_{p-1}\in \mathbb{Q}$, it is straightforward to verify the equation
\[
A \varphi \left( \sum_{j=1}^{p-1} c_j \beta^j\right) B = \left( P(c_1,\dots,c_{p-1})  - Q(c_1,\dots,c_{p-1}) \right) \begin{bmatrix}
0 & 0 & \cdots & 1\\
0 & 0 & \cdots & 0\\
\vdots & \vdots & \ddots & \vdots\\
1 & 0 & \cdots & 0\\
\end{bmatrix}.
\]
According to Lemma~\ref{lem:AB}, we obtain the desired characterization of $\mathcal{L}_0$.
\end{proof}

We conclude this section by a remark on the multiplication of matrices in $\mathcal{L}_0$. By Theorem~\ref{thm:L_i}, an element in $\mathcal{L}_0 $ can be written as $A (P(c_1,\dots,c_{p-1})  - Q(c_1,\dots,c_{p-1}) ) A^{-1}$ for some $c_1,\dots, c_{p-1} \in \mathbb{Q}$. On the one hand, we notice that in particular, $P(c_1,\dots,c_{p-1})$ is a Toeplitz matrix and $Q(c_1,\dots,c_{p-1})$ is a rank one matrix. We recall that the product of any matrix with a rank one matrix can be evaluated by $\O(p^2)$ operations, while the exsiting optimal algorithm for the product of a matrix and a Toeplitz matrix has complexity $\O^\sim(p^2)$ \cite{GV13}. Therefore, the product of two matrices in $\mathcal{L}_0$ can be evaluated by $\O^\sim(p^2)$ operations. Unfortunately, this method can not efficiently multiply elements in $\bigoplus_{i\in I} \mathcal{L}_i$. On the other hand, according to Theorem~\ref{the-3}, not only does Algorithm~\ref{alg-3} has complexity $\O^\sim(p^2)$ as well when applied to elements in $\mathcal{L}_0$, but it is also efficient on $\bigoplus_{i\in I} \mathcal{L}_i$ for any $I\subseteq \{0,\dots, p-2\}$.
\section{randomized matrix multiplication via skew polynomials}
According to the proof of Proposition~\ref{prop:matrix multiplication}, we notice that the complexity of Algorithm~\ref{alg-3} is dominated by the cost of multiplying two skew polynomials. Here the support of the product is estimated by the sumset, which is guaranteed by Lemma~\ref{lem:sparsity} \eqref{lem:sparsity:item1}. However, this only supplies a rough estimate in the occurrence of cancellations of coefficients. For instance, if we take $f(x) = 1 - x$ and $g(x) = \sum_{j=0}^{k}x^j$ where $0 \le k \le p-3$, then obviously we have $\#(f\ast_{\sigma}g) = 2$ while $\#\mathbb{S}(f,g) = k+2$. This example also indicates that the sparsity of the product can be much smaller than the cardinality of the sumset.

The goal of this section is to present a randomized algorithm for matrix multiplication. In comparison to Algorithm~\ref{alg-3}, this randomized algorithm gets an acceleration by multiplying two skew-polynomials with the sparsity of their product known a priori. Essentially the randomized algorithm consists of two parts. One is the estimation of the sparsity of the product and the other is the interpolation of the product. It is worth to remark that the randomness in this algorithm is caused by the probabilistic method we use to estimate the sparsity of the product.
\subsection{testing $AB \stackrel{?}{=} C$}\label{subsec:equalit test} In our randomized algorithm, we need to verify whether our result is correct. To do that, it suffices to test whether $AB = C$ for arbitrary matrices $A,B$ and $C$. We recall the celebrated Freivalds' algorithm \cite{Freivalds77} in Algorithm~\ref{alg-5}.
\begin{algorithm}[!ht]
\caption{Freivalds' algorithm}
\label{alg-5}
\begin{algorithmic}[1]
\renewcommand{\algorithmicrequire}{\textbf{Input}:}\Require matrices $M,A,B\in \mathbb{Q}^{n \times n}$ and $\mu\in(0,1)$.

\renewcommand{\algorithmicensure}{\textbf{Output}:}\Ensure ``equal" or ``not equal".
\State \label{alg-5:step1} compute $k:=\lceil \log_2\frac{1}{\mu}\rceil$.
\For{$i=1,\dots,k$} \label{alg-5:step2}

\State\label{alg-5:step2.1} randomly and uniformly choose $y = (y_1,\dots,y_n)\in \{0,1\}^n$.

\State\label{alg-5:step2.2} compute $\xi \coloneqq My$ and $\eta \coloneqq A By$.

\State if $\xi \neq \eta$, then return ``not equal".
\EndFor
\State return ``equal".
\end{algorithmic}
\end{algorithm}

By a straightforward calculation one can obtain the following error analysis.
\begin{lemma}\cite[Theorem~1.4]{Cryan2006}
\label{the-2}
Let $M,A,B\in \mathbb{Q}^{n\times n}$ be matrices. Assume $M\neq A B$. If $y \in \{0,1\}^n$ is uniformly and randomly chosen, then
\[
\mathbf{Pr}(My \neq A B y )\geq \frac{1}{2}.
\]
\end{lemma}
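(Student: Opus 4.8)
The plan is to reduce everything to the single nonzero matrix $D \coloneqq M - AB$ and to prove the equivalent bound $\mathbf{Pr}(Dy = 0) \le \tfrac12$; since the event $My \ne ABy$ is precisely $Dy \ne 0$, this is exactly the claim. First I would fix a pair of indices $i,j$ with $d_{ij} \ne 0$, where $D = (d_{kl})_{k,l=1}^n$ — such a pair exists because $D \ne 0$. The event $Dy = 0$ in particular forces the $i$-th coordinate to vanish, so $\mathbf{Pr}(Dy = 0) \le \mathbf{Pr}\big(\sum_{l=1}^n d_{il} y_l = 0\big)$, and it suffices to bound the right-hand side by $1/2$.

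Next I would invoke the principle of deferred decisions. Reveal the coordinates $y_l$ for $l \ne j$ first; by independence, $y_j$ remains uniform on $\{0,1\}$ and independent of these. Writing $S \coloneqq \sum_{l \ne j} d_{il} y_l$, the quantity $\sum_{l=1}^n d_{il} y_l$ equals $d_{ij} y_j + S$, which for each fixed value of $S$ vanishes for at most one value of $y_j$, namely $y_j = -S/d_{ij}$ (which may fail to lie in $\{0,1\}$ at all). Hence the conditional probability $\mathbf{Pr}\big(\sum_{l} d_{il} y_l = 0 \,\big|\, y_l,\ l \ne j\big) \le 1/2$, and averaging over the revealed coordinates gives $\mathbf{Pr}\big(\sum_{l} d_{il} y_l = 0\big) \le 1/2$. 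Combined with the first step this yields $\mathbf{Pr}(Dy = 0) \le 1/2$, i.e. $\mathbf{Pr}(My \ne ABy) \ge 1/2$, as desired.

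The argument is entirely elementary, so I do not expect a genuine obstacle; the only point that needs a little care is the conditioning step, where one must check that, once all coordinates other than $y_j$ are fixed, the expression $\sum_l d_{il} y_l$ depends affinely on $y_j$ with nonzero leading coefficient $d_{ij}$, so that at most one choice of $y_j$ is ``bad''. I would close with the standard remark that, by independence of the $k$ trials, the failure probability of the whole loop in Algorithm~\ref{alg-5} is at most $2^{-k} \le \mu$, which is the consequence actually used in the randomized algorithm that follows.
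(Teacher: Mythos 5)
Your argument is correct and is the standard proof of Freivalds' lemma: reduce to $D = M - AB \ne 0$, pick a nonzero entry $d_{ij}$, and use the principle of deferred decisions on $y_j$ to show at most one choice of $y_j$ kills the $i$-th coordinate of $Dy$. The paper itself gives no proof here — it only cites \cite[Theorem~1.4]{Cryan2006} — and your argument is precisely the one found in that reference, so there is nothing to compare beyond noting the match.
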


\begin{theorem}\label{the-1}
If $M=A B$, Algorithm~\ref{alg-5} returns ``equal". If $M\neq A B$, it returns ``not equal" with probability at least $(1-\mu)$, and it returns ``equal" with probability at most $\mu$. Assuming that we may obtain a random bit with bit-cost $\O(1)$, the cost of Algorithm~\ref{alg-5} is $\O(n^2\log\frac{1}{\mu})$ over $\mathbb{Q}$.
\end{theorem}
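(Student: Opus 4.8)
The plan is to verify the three assertions of Theorem~\ref{the-1} in turn; the first two are probabilistic and the last is a routine cost count, so I expect no genuine obstacle, only two points that need care (flagged at the end).

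For the completeness direction, suppose $M = AB$. Then for \emph{every} vector $y \in \{0,1\}^n$ we have $My = (AB)y = A(By)$, so the vectors $\xi = My$ and $\eta = ABy$ computed in Step~\ref{alg-5:step2.2} coincide in each of the $k$ iterations. Hence the conditional ``return not equal'' is never triggered, and after the loop the algorithm outputs ``equal''. For the soundness direction, suppose $M \neq AB$. In iteration $i$ the algorithm draws $y$ uniformly from $\{0,1\}^n$, independently of the previous draws. By Lemma~\ref{the-2}, $\mathbf{Pr}(My = ABy) \le \tfrac12$, so the probability that the test $\xi \neq \eta$ fails to fire in that single iteration is at most $\tfrac12$. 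By independence of the $k$ draws, the probability that it fails to fire in \emph{all} $k$ iterations — equivalently, that the algorithm returns ``equal'' — is at most $2^{-k}$. Since $k = \lceil \log_2 \tfrac1\mu \rceil \ge \log_2 \tfrac1\mu$, this is at most $\mu$; hence the algorithm returns ``equal'' with probability at most $\mu$ and ``not equal'' with probability at least $1 - \mu$.

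For the complexity, the key observation is that $\eta$ must be computed as $\eta = A(By)$, i.e.\ as two successive matrix--vector products, and never as $(AB)y$: over $\mathbb{Q}$ each matrix--vector product costs $\O(n^2)$ operations, whereas forming $AB$ explicitly would reintroduce precisely the fast-matrix-multiplication cost that Freivalds' test is meant to avoid. With this in mind, each pass of the loop in Step~\ref{alg-5:step2} consists of: sampling $n$ random bits, which is $\O(n)$ under the stated assumption on random-bit cost; the three matrix--vector products $My$, $By$, $A(By)$, which total $\O(n^2)$; and one comparison of length-$n$ vectors, $\O(n)$. So each iteration costs $\O(n^2)$. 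The loop runs $k = \O(\log \tfrac1\mu)$ times and the preprocessing in Step~\ref{alg-5:step1} is negligible, giving total cost $\O(n^2 \log \tfrac1\mu)$ over $\mathbb{Q}$.

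The only subtleties to be careful about are the two I have used implicitly: the independence of the iterations (needed to multiply the per-iteration failure probabilities and obtain the $2^{-k}$ bound) and the associativity identity $(AB)y = A(By)$ (used both for completeness and to keep each iteration at $\O(n^2)$). Neither is a real obstacle; everything else is a direct appeal to Lemma~\ref{the-2} and an elementary operation count.
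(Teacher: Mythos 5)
Your proof is correct and follows essentially the same route as the paper: completeness from $(AB)y = A(By)$, soundness by applying Lemma~\ref{the-2} once per independent iteration to get the $2^{-k} \le \mu$ bound, and the $\O(n^2)$-per-iteration cost count. The only difference is expository — you make explicit that $\eta$ must be formed as $A(By)$ and not via $AB$, a point the paper leaves implicit.
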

\begin{proof}
If $M=A B$, then $\xi = \eta$ and thus Algorithm~\ref{alg-5} returns ``equal". If $M\neq A B$, we observe that Algorithm~\ref{alg-5} returns ``not equal" if and only if in at least one iteration of Step~\ref{alg-5:step2}, $\xi \ne \eta$. In each iteration of Step~\ref{alg-5:step2}, Lemma~\ref{the-2} indicates that the probability of $\xi = \eta$ is at most $1/2$.  Since all iterations are independent, the probability for Algorithm~\ref{alg-5} to return ``not equal" is thus at least
\[
1-(\frac{1}{2})^k= 1-(\frac12)^{\lceil \log_2\frac{1}{\mu}\rceil}\geq 1-(\frac12)^{\log_2\frac{1}{\mu}}=1-\mu.
\]

Next we analyse the complexity. By assumption, the cost of obtaining $kn$ random bits is $\O(kn)$, while computing $\xi$ and $\eta$ in each iteration of Step~\ref{alg-5:step2} costs $\O(n^2)$ operations over $\mathbb{Q}$. Since $k  = \lceil \log_2 \frac{1}{\mu} \rceil$, we may conclude that the total complexity is $\O(n^2\log\frac{1}{\mu})$ over $\mathbb{Q}$.
\end{proof}

\subsection{interpolation}\label{subsec:interpolation}
In this subsection, we discuss the interpolation of a skew polynomial $f\in\mathbb{Q}(\beta)[x;\sigma]/(x^{p-1}-1)$ when an upper bound of its sparsity is given. Let $T\ge t\coloneqq \# (f) $ be a positive integer. We want to recover $f$ from its evaluations $f(v^{\ell}_1), \ell =0,1\dots,2T-1$.

Since the sparsity of $f$ is at most $T$, we can write $f(x) = \sum_{i=1}^T c_i x^{e_i}$. It is clear that there are exactly $t$ nonzero coefficients in $f$. We consider an auxiliary polynomial $\Lambda_T(z)$ defined as follows:
 \begin{equation}\label{eq-6}
    \Lambda_T(z) \coloneqq \prod_{i=1}^{T} (z-v_{e_j+1}) = z^T + \lambda_{T-1}z^{T-1}+\dots+\lambda_1 z+\lambda_0.
 \end{equation}
For any $\ell = 0,\dots, 2T-1$, we have
\[
  a_\ell \coloneqq f(v^\ell_1) = \sum_{i=1}^T c_i\sigma^{e_i}(v^{\ell}_1)  = \sum_{i=1}^T c_i v^\ell_{e_i+1}.
\]
We observe that for each $j=0, \dots, T - 1$,
we have
\small
\begin{align*}
\sum_{i=1}^{T} c_i v_{e_i + 1}^j \Lambda_T (v_{e_i + 1}) &=\left( \sum_{k=0}^{T - 1} \lambda_k(c_1 v_{e_1+1}^{k+j} +c_2 v_{e_2+1}^{k+j}+\dots+c_T v_{e_T+1}^{k+j})\right) +(c_1 v_{e_1+1}^{T+j}+c_2 v_{e_2+1}^{T+j} +\dots+c_T v_{e_T+1}^{T+j}) \\
&= \left( \sum_{k=0}^{T-1} a_{j + k} \lambda_k  \right) +  a_{j + T}.
\end{align*}
\normalsize

According to \eqref{eq-6}, we have $\Lambda_T (v_{e_i+1})=0$ for each $1\le i \le T$. This implies that
\[
\left( \sum_{k=0}^{T - 1} a_{j + k} \lambda_k  \right) +  a_{j + T} = 0,\quad 0\leq j\leq T - 1.
\]

We now have the Toeplitz system $A_T \vec{\lambda} =\vec{b}$  where
\begin{equation}\label{eq-toep}
A_T = \begin{bmatrix}
a_{T-1}&a_T&\cdots&a_{2T-2}\\
a_{T-2}&a_{T-1}&\cdots&a_{2T-3}\\
\vdots&\vdots&\ddots&\vdots\\
a_{0}&a_{1}&\cdots&a_{T-1}
\end{bmatrix},\quad \vec{\lambda} = \begin{bmatrix}
\lambda_0 \\
\lambda_1 \\
\vdots \\
\lambda_{T -1}\\
\end{bmatrix},\quad \vec{b} = -\begin{bmatrix}
a_{2T-1} \\
a_{2T-2} \\
\vdots \\
a_T\\
\end{bmatrix}.
\end{equation}

The matrix $A_T$ has rank $t$ which can be seen from the factorization:
\begin{equation}\label{eq-vand2}
A_T = \begin{bmatrix}
v_{e_1+1}^{T-1}& v_{e_2+1}^{T-1}&\cdots& v_{e_T+1}^{T-1}\\
v_{e_1+1}^{T-2}& v_{e_2+1}^{T-2}&\cdots& v_{e_T+1}^{T-2}\\
\vdots&\vdots&\ddots&\vdots\\
1&1&\cdots&1
\end{bmatrix}
\begin{bmatrix}
c_1&0&\cdots&0 \\
0&c_2&\cdots&0 \\
\vdots&\vdots&\ddots&\vdots \\
0&0&\cdots&c_T
\end{bmatrix}
\begin{bmatrix}
1& v_{e_1+1}&\cdots&v_{e_1+1}^{T-1} \\
1&v_{e_2+1}&\cdots&v_{e_2+1}^{T-1}\\
\vdots&\vdots&\ddots&\vdots \\
1&v_{e_T+1}&\cdots&v_{e_T+1}^{T-1}
\end{bmatrix}.
\end{equation}
Since the $v_i$'s are pairwise distinct, the two Vandermonde matrices in \eqref{eq-vand2} are nonsingular. By assumption, there are exactly $t$ nonzero $c_i$'s, thus the diagonal matrix in \eqref{eq-vand2} is has rank $t$. In particular, $A_t$ is nonsingular.

By choosing the first $t$ evaluations of $f$, we obtain the following transposed Vandermonde system $V\vec{c}=\vec{a}$ for the coefficients of $f$, where
\begin{equation}\label{eq-vand}
V = \begin{bmatrix}
1&1&\cdots&1\\
v_{e_1+1}&v_{e_2+1}&\cdots&v_{e_t+1}\\
\vdots&\vdots&\ddots&\vdots\\
v_{e_1+1}^{t-1}&v_{e_2+1}^{t-1}&\cdots&v_{e_t+1}^{t-1}
\end{bmatrix},\quad \vec{c} = \begin{bmatrix}
c_1 \\
c_2 \\
\vdots \\
c_t
\end{bmatrix},\quad \vec{a} = \begin{bmatrix}
a_0 \\
a_1 \\
\vdots \\
a_{t-1}
\end{bmatrix}.
\end{equation}

Now we are ready to present Algorithm~\ref{alg-4} for the interpolation of a skew polynomial with a given upper bound  on its sparsity. Main ingredients of Algorithm~\ref{alg-4} are:
\begin{itemize}
\item compute the roots $v_{e_1+1},\dots,v_{e_t+1}$ of $\Lambda_t(z)$ from the $\{a_i\}_{i=0}^{2T - 1}$;
\item determine the exponents $e_1,\dots,e_t$ from $\{v_{e_j+1}\}_{j=1}^{t}$;
\item compute the coefficients $c_1,\dots,c_t$ by $\{a_k\}_{k=0}^{t-1}$ and $\{v_{e_j+1}\}_{j=1}^{t}$.
\end{itemize}

\begin{algorithm}[!ht]
\caption{Interpolation with an upper bound of sparsity}
\label{alg-4}
\begin{algorithmic}[1]
\renewcommand{\algorithmicrequire}{\textbf{Input}:}\Require
evaluations $\{a_\ell = f(v^\ell_{1})\}_{\ell=0}^{2T-1}$ and upper bound $T \ge \# (f)$.
\renewcommand{\algorithmicensure}{\textbf{Output}:}\Ensure coefficients of $f$.
\State \label{alg-4:step2} compute $t = \rank (A_T)$.
\State \label{alg-4:step3} form $A_t,\vec{b}_t$ by \eqref{eq-toep} and solve the Toeplitz system $A_t \vec{\lambda}_t=\vec{b}_t$ for $\vec{\lambda}_t$.
\State \label{alg-4:step4} construct $\Lambda_t(z)$ by \eqref{eq-6} and compute the roots $v_{e_1+1},\dots, v_{e_t + 1}$ of $\Lambda_t(z)$ by fast multiple evaluations at $v_1,v_2, \dots,v_{p-1}$.
\State \label{alg-4:step5} form $V$ and $\vec{a}$ by \eqref{eq-vand} and
solve the transposed Vandermonde system $V\vec{c}=\vec{a}$ for $\vec{c}$.
\end{algorithmic}
\end{algorithm}

\begin{lemma}\label{lm-6}
Algorithm \ref{alg-4} computes coefficients of $f$ by $\O^\thicksim(p^2)$ arithmetic operations over $\mathbb{Q}$.
\end{lemma}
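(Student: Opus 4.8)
The plan is to go through the four steps of Algorithm~\ref{alg-4} and bound each. All the data produced --- the inputs $a_\ell = f(v_1^\ell)$, the coefficients $\lambda_k$ of $\Lambda_t$, the roots $v_{e_j+1}$, and the coefficients $c_i$ --- lie in the cyclotomic field $\mathbb{Q}(\beta)$, in which, by Lemma~\ref{lem:arithmetic operation}, one arithmetic operation costs $\O^\sim(p)$ operations in $\mathbb{Q}$. Since $t \le T \le p-1$, it therefore suffices to show that Algorithm~\ref{alg-4} performs only $\O^\sim(p)$ operations in $\mathbb{Q}(\beta)$ and then multiply by $\O^\sim(p)$.

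The heart of the argument is Steps~\ref{alg-4:step2}--\ref{alg-4:step3}. After reversing the order of its rows, $A_T$ becomes the $T\times T$ Hankel matrix of the sequence $(a_\ell)_{\ell=0}^{2T-1}$, and by the factorization \eqref{eq-vand2} this sequence satisfies $a_\ell = \sum_{i=1}^{t} c_i v_{e_i+1}^{\ell}$ with the $v_{e_i+1}$ pairwise distinct and the $c_i$ nonzero; hence its shortest linear recurrence has order exactly $t$ and characteristic polynomial $\Lambda_t(z) = \prod_{i=1}^{t}(z - v_{e_i+1})$. By the classical identification of the rank of a Hankel matrix with the length of the shortest recurrence of its generating sequence (applicable here since the number $2T$ of available terms is at least $2t$), we get $\rank(A_T) = t$. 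I would therefore run the fast, half-GCD-based Berlekamp--Massey algorithm on $a_0,\dots,a_{2T-1}$: this recovers $t$ and the vector $\vec\lambda_t$ simultaneously in $\O^\sim(T) = \O^\sim(p)$ operations over $\mathbb{Q}(\beta)$, discharging both steps at once. (Equivalently, one may compute $\rank(A_T)$ directly and then solve the Toeplitz system $A_t\vec\lambda_t = \vec b_t$, nonsingular as shown above, via a superfast structured-matrix solver, again in $\O^\sim(p)$ operations.)

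For Step~\ref{alg-4:step4}, form $\Lambda_t(z) = z^t + \lambda_{t-1}z^{t-1} + \cdots + \lambda_0$ from $\vec\lambda_t$ and evaluate this degree-$t$ polynomial at the $p-1$ points $v_1,\dots,v_{p-1}$ by fast multipoint evaluation; since $\max(t,p-1)=p-1$ this costs $\O^\sim(p)$ operations over $\mathbb{Q}(\beta)$. As $\Lambda_t(z) = \prod_{i=1}^t(z-v_{e_i+1})$ and each $e_i+1 \in \{1,\dots,p-1\}$, the indices $m$ with $\Lambda_t(v_m)=0$ are precisely $e_1+1,\dots,e_t+1$, so the exponents are read off directly. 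For Step~\ref{alg-4:step5}, the matrix $V$ in \eqref{eq-vand} is a $t\times t$ transposed Vandermonde matrix with distinct nodes, hence nonsingular, and $V\vec c = \vec a$ is solved in $\O^\sim(t) = \O^\sim(p)$ operations over $\mathbb{Q}(\beta)$ by the algorithm of \cite{kaltofen1988improved}. Adding these contributions yields $\O^\sim(p)$ operations in $\mathbb{Q}(\beta)$, i.e. $\O^\sim(p^2)$ operations in $\mathbb{Q}$. The main obstacle is keeping Steps~\ref{alg-4:step2}--\ref{alg-4:step3} genuinely superfast --- $\O^\sim(p)$, not $\O^\sim(p^2)$, operations over $\mathbb{Q}(\beta)$, the latter inflating the total to $\O^\sim(p^3)$ over $\mathbb{Q}$ --- which is exactly why the identification of $A_T$ with a Hankel matrix of a linearly recurrent sequence (hence the applicability of Berlekamp--Massey / the half-GCD) is essential; one should also check that the fast subroutines invoked are purely field-arithmetic and so run over $\mathbb{Q}(\beta)$ with cost translated by Lemma~\ref{lem:arithmetic operation}.
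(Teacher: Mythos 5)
Your proof is correct and follows essentially the same route as the paper: carry out Steps \ref{alg-4:step2}--\ref{alg-4:step5} in $\O^\thicksim(p)$ operations over $\mathbb{Q}(\beta)$ (superfast structured linear algebra for the Toeplitz/Hankel and transposed Vandermonde systems, fast multipoint evaluation for root-finding), then convert to $\mathbb{Q}$ via Lemma~\ref{lem:arithmetic operation}. The paper simply cites \cite{kaltofen1988improved} and \cite{GG99} for these bounds, whereas you unpack the rank-plus-Toeplitz step into the explicit Hankel/Berlekamp--Massey argument that underlies that citation; the end result and the decomposition of the cost are the same.
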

\begin{proof}
The correctness follows from the discussion before Algorithm \ref{alg-4}, thus it is sufficient to analyse the complexity. According to \cite{kaltofen1988improved}, the cost of Steps~\ref{alg-4:step2}, \ref{alg-4:step3} and \ref{alg-4:step5} are respectively $\O^\thicksim(T)$ over $\mathbb{Q}(\beta)$. In Step~\ref{alg-4:step4}, it requires $\O^\thicksim(p)$ operations over $\mathbb{Q}(\beta)$ by \cite{GG99}. Since $T\leq p-1$, the total complexity of Algorithm \ref{alg-4} is  $\O^\thicksim(p^2)$ over $\mathbb{Q}$.
\end{proof}

\subsection{Monte Carlo matrix multiplication via skew polynomials} Our randomized algorithm for matrix multiplication is obtained by assembling ingredients discussed in Subsections~\ref{subsec:relation}, \ref{subsec:equalit test} and \ref{subsec:interpolation}. We record our algorithm in Algorithm~\ref{alg-6}.
\begin{algorithm}[!ht]
\caption{Monte Carlo matrix multiplication}
\label{alg-6}
\begin{algorithmic}[1]
\renewcommand{\algorithmicrequire}{\textbf{Input}:}\Require
$A,B\in \mathbb{Q}^{(p-1) \times (p-1)}$ and $\nu\in(0,1)$
\renewcommand{\algorithmicensure}{\textbf{Output}:}\Ensure $AB$.
\State \label{alg-6:step1}  find a $p$-th primitive root of unity $\beta$ and a generator $r$ of $\mathbb{Z}_{p}$.
\State \label{alg-6:step2} compute $v_i \coloneqq \beta^{r^{i-1}}, i=1,\dots,p-1$.
\State \label{alg-6:step3} compute $\varphi^{-1}(A),\varphi^{-1}(B)$ by \eqref{eq-1} w.r.t the normal basis $\{v_1,\dots,v_{p-1}\}$.
\State \label{alg-6:step4} set $T = 1$ and $\tau = 0$.
\While{$\tau = 0$} \label{alg-6:step5}
\State \label{alg-6:step5-0} compute $a_\ell \coloneqq \varphi^{-1}(A)\ast_{\sigma} \varphi^{-1}(B)(v^i_1), \ell =0,1,\dots,2T-1$.
\State \label{alg-6:step5-1}  interpolate $f$ with input $(\{a_\ell\}_{\ell=0}^{2T-1},T)$ by Algorithm \ref{alg-4}.
\State \label{alg-6:step5-2} compute $M \coloneqq \varphi(f)$ by Algorithm \ref{alg-2} w.r.t. the normal basis $\{v_1,\dots,v_{p-1}\}$.
\State \label{alg-6:step5-3} test whether $M=A B$ by calling Algorithm \ref{alg-5} with $\mu=\frac{\nu}{\lceil\log_2 (p-1) \rceil}$.
\State \label{alg-6:step5-4} if the test returns ``equal" then set $\tau = 1$; if the test returns ``not equal" then set $T \coloneqq 2T$.
\EndWhile
\end{algorithmic}
\end{algorithm}

\begin{theorem}
Given $A,B\in \mathbb{Q}^{(p-1) \times (p-1)}$ and $\nu \in (0,1)$, Algorithm~\ref{alg-6} computes $AB$ correctly with probability at least $(1 - \nu)$. The cost of Algorithm~\ref{alg-6} is $\O^\thicksim(t^{\omega-2}p^2+p^2\log\frac{1}{\nu})$ over $\mathbb{Q}$, where $t$ is the skew-sparsity of $AB$.
\end{theorem}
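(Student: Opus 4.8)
The argument splits into a correctness analysis and a complexity analysis, paralleling the structure of Algorithm~\ref{alg-6}. For correctness, observe that the only source of error is the equality test in Step~\ref{alg-6:step5-3}: by the commutative diagram in the proof of Proposition~\ref{prop:matrix multiplication}, whenever the true sparsity bound $T$ satisfies $T \ge t = \#(\varphi^{-1}(A)\ast_\sigma\varphi^{-1}(B))$, Algorithm~\ref{alg-4} correctly interpolates $f = \varphi^{-1}(A)\ast_\sigma\varphi^{-1}(B)$ (by Lemma~\ref{lm-6}), hence $M = \varphi(f) = AB$ and Freivalds' test returns ``equal'' with certainty (Theorem~\ref{the-1}). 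So the algorithm can only fail if, for some $T < t$, we interpolate an incorrect $f$, obtain $M \ne AB$, and Freivalds' test nonetheless reports ``equal''. The plan is to bound the probability of this bad event.

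\textbf{Correctness.} The while-loop runs with $T = 1, 2, 4, \dots$, so it performs at most $\lceil \log_2(p-1)\rceil$ iterations before $T \ge t$ (using $t \le p-1$ from Lemma~\ref{lem:sparsity}\eqref{lem:sparsity:item2}). In each iteration where $M \ne AB$, Theorem~\ref{the-1} guarantees that Freivalds' test with parameter $\mu = \nu/\lceil\log_2(p-1)\rceil$ erroneously returns ``equal'' with probability at most $\mu$. By a union bound over the (at most $\lceil\log_2(p-1)\rceil$) iterations, the probability that \emph{some} such erroneous acceptance occurs is at most $\lceil\log_2(p-1)\rceil \cdot \mu = \nu$. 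Therefore the algorithm outputs $AB$ correctly with probability at least $1 - \nu$.

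\textbf{Complexity.} Steps~\ref{alg-6:step1} and \ref{alg-6:step2} cost $\O^\sim(p^2)$ (membership testing for a primitive root plus computing the $v_i$), and Step~\ref{alg-6:step3} costs $\O^\sim(p^2)$ by Proposition~\ref{pro-1}. For the while-loop, consider the iteration with a given value of $T$. Step~\ref{alg-6:step5-0} evaluates the skew product at $2T$ points; by Lemma~\ref{lema:eval} this is $\O(T^{\omega-2}p^2)$ $\mathbb{Q}$-operations (note the skew product itself need not be formed explicitly — one evaluates $\varphi^{-1}(A)$ and $\varphi^{-1}(B)$ successively using \eqref{eq:evaluation}, or equivalently multiplies through the matrix form). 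Step~\ref{alg-6:step5-1} costs $\O^\sim(p^2)$ by Lemma~\ref{lm-6}; Step~\ref{alg-6:step5-2} costs $\O^\sim(p^2)$ by Proposition~\ref{pro-2}; Step~\ref{alg-6:step5-3} costs $\O(p^2\log\frac{1}{\mu}) = \O^\sim(p^2\log\frac{1}{\nu})$ by Theorem~\ref{the-1}. Since the loop terminates once $T$ reaches the smallest power of two that is $\ge t$, the values of $T$ used form a geometric sequence $1, 2, \dots, 2^{\lceil\log_2 t\rceil}$, so summing the dominant $\O(T^{\omega-2}p^2)$ term over all iterations gives $\O\big(\sum_{j=0}^{\lceil\log_2 t\rceil} 2^{j(\omega-2)} p^2\big) = \O(t^{\omega-2}p^2)$ since $\omega - 2 \ge 0$; the remaining per-iteration costs sum to $\O^\sim(p^2 + p^2\log\frac{1}{\nu})$ over the $\O(\log p)$ iterations. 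Adding everything yields the claimed bound $\O^\sim(t^{\omega-2}p^2 + p^2\log\frac{1}{\nu})$.

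\textbf{Main obstacle.} The only subtle point is the union bound in the correctness argument: one must be careful that the ``doubling'' strategy produces at most $\lceil\log_2(p-1)\rceil$ iterations \emph{in which the test is actually consulted}, and that once $T \ge t$ the test is deterministically correct so no further error can accumulate — i.e. the randomness is used a bounded number of times with the per-call error budget $\mu$ scaled down accordingly. Everything else is bookkeeping of costs already established in earlier lemmas and propositions.
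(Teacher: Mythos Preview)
Your proposal is correct and follows essentially the same approach as the paper's own proof: both argue correctness via the one-sided error of Freivalds' test combined with a bound on the number of doubling iterations, and both assemble the complexity from the per-step costs established in Propositions~\ref{pro-1}, \ref{pro-2}, Lemmas~\ref{lema:eval}, \ref{lm-6}, and Theorem~\ref{the-1}. The only minor differences are expository: you phrase the error analysis as a union bound while the paper uses the equivalent inequality $(1-\mu)^{s-1}\ge 1-(s-1)\mu$, and you explicitly sum the geometric series $\sum_j 2^{j(\omega-2)}p^2$ for Step~\ref{alg-6:step5-0}, whereas the paper simply multiplies the per-iteration cost by the $\O(\log p)$ iteration count and absorbs the logarithmic factor into the $\O^\sim$ notation.
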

\begin{proof}
We denote by $t =\# \varphi^{-1}(A B)$ the skew-sparsity of $AB$. Assume $k$ is a positive integer such that $2^{k-1}< t\leq 2^k$. If Algorithm~\ref{alg-6} computes $AB$ correctly, then according to Lemma~\ref{lm-6}, the while loop in Step~\ref{alg-6:step5} of Algorithm~\ref{alg-6} must terminate at the $s$-th iteration, where $s \le k$. We denote by $M_i$ (resp. $\tau_i,T_i$) the matrix (resp. numbers) obtained in Step~\ref{alg-6:step5-2} (resp. Step~\ref{alg-6:step5-4}) at the $i$-th iteration, $i=1,\dots, s$. Then clearly we have
\begin{itemize}
\item $T_i = 2^i, i =1,\dots, s$;
\item $M_i \ne AB, i =1,\dots, s-1$ and $M_s = AB$;
\item $\tau_1 =\cdots = \tau_{s-1} = 0$ and $\tau_s = 1$.
\end{itemize}
By Theorem~\ref{the-1}, we obtain
\[
\mathbf{Pr}\left(
M_i \ne AB, \tau_i = 0, i=1,\dots, s-1
\right) \ge \left( 1-\frac{\nu}{\lceil\log_2 (p-1)\rceil} \right)^{s-1}\ge  1-\frac{(s-1)\nu}{\lceil\log_2 (p-1)\rceil}\geq 1- \nu.
\]
Therefore, the probability of Algorithm~\ref{alg-6} computing $AB$ correctly is at least $(1 - \nu)$.

It remains to analyse the complexity. First of all, if $r$ is a primitive root of $\mathbb{Z}_p$, then $r^1,r^2,\dots, r^{p-1}\pmod{p}$ must be distinct integers from $1$ to $(p-1)$. Since the complexity of checking the membership of a number in $\{2,\dots, p-1\}$ is $\O(p^2)$ over $\mathbb{Q}$, the total cost of Steps~\ref{alg-6:step1} and \ref{alg-6:step2} is $\O(p^2)$. By Proposition \ref{pro-1}, the complexity of Step~\ref{alg-6:step3}  is $\O^\thicksim(p^2)$ over $\mathbb{Q}$.

Next we count the complexity of each iteration in Step~\ref{alg-6:step5}. According to the analysis in Section \ref{sub-sec-1}, the complexity of  Step~\ref{alg-6:step5-0} is $\O^\sim(t^{\omega-2}p^2)$ over $\mathbb{Q}$. By Lemma \ref{lm-6} and Proposition \ref{pro-2}, the total cost of Steps~\ref{alg-6:step5-1} and \ref{alg-6:step5-2} is $\O^\thicksim(p^2)$ operations in $\mathbb{Q}$. Theorem \ref{the-1} impies that Step~\ref{alg-6:step5-3} requires $\O^\thicksim(n^2\log\frac{1}{\nu})$ operations in $\mathbb{Q}$.

Lastly, since there are only at most $\log_2 (p-1)$ iterations, the complexity of Algorithm~\ref{alg-6} is $\O^\thicksim(t^{\omega-2}p^2+p^2\log\frac{1}{\nu})$ over $\mathbb{Q}$.
 \end{proof}

\section{Conclusion}
In this paper, we propose a new method for $(p-1)\times (p-1)$ matrix multiplication over $\mathbb{Q}$ via skew polynomials, where $p$ is a prime number. Via the sparsity of skew polynomials, we are able to define skew-sparse matrices and we obtain an explicit characterization of such matrices. Based on the new method, we design a deterministic algorithm for matrix multiplication, which attains an acceleration if the product is skew-sparse. We also propose a randomized algorithm which can  further accelerate the matrix multiplication.


\end{document}